


\documentclass[pdflatex,sn-apa, table]{sn-jnl}
\catcode`\|=12\relax 

\usepackage{orcidlink}

\jyear{2022}%

\theoremstyle{thmstyleone}%
\newtheorem{theorem}{Theorem}
%

\theoremstyle{thmstyletwo}%
\newtheorem{example}{Example}%

\theoremstyle{thmstylethree}%
\newtheorem{definition}{Definition}%

\raggedbottom


\usepackage[utf8]{inputenc}
\usepackage{graphicx}
\usepackage{algorithmicx}
\usepackage{algpseudocode}
\usepackage[UKenglish]{babel}
\usepackage[UKenglish]{isodate}
\usepackage{amssymb}

\usepackage{amsthm}
\newtheorem{lemma}[theorem]{Lemma}

\newtheoremstyle{case}{}{}{}{}{}{:}{ }{}
\theoremstyle{case}

\def\s#1{\mbox{\boldmath $#1$}}

\algnewcommand\Or{\textbf{or} }

\usepackage[per-mode=fraction, separate-uncertainty = true,multi-part-units=single, tight-spacing=true]{siunitx}
\colorlet{tablebackgroundcol}{gray!12}

\usepackage{amsmath}

\newcommand\norm[1]{\lvert#1\rvert}
\newcommand \romanAlphabet[0]{\Sigma_R}

\setcounter{MaxMatrixCols}{20}

\algdef{SE}[DOWHILE]{Do}{doWhile}{\algorithmicdo}[1]{\algorithmicwhile\ #1}

\usepackage{tabularx,environ,amsmath,amssymb}
\makeatletter
\newcommand{\problemtitle}[1]{\gdef\@problemtitle{#1}}
\newcommand{\probleminput}[1]{\gdef\@probleminput{#1}}
\newcommand{\problemquestion}[1]{\gdef\@problemquestion{#1}}
\newcommand{\problemoutput}[1]{\gdef\@problemoutput{#1}}%
\NewEnviron{problem}{
  \problemtitle{}\probleminput{}\problemquestion{}
  \BODY
  \par\addvspace{.5\baselineskip}
  \noindent
  \begin{tabularx}{\textwidth}{@{\hspace{\parindent}} l X c}
    \multicolumn{2}{@{\hspace{\parindent}}l}{\@problemtitle} \\
    \textbf{Input:} & \@probleminput \\
    \textbf{Question:} & \@problemquestion
  \end{tabularx}
  \par\addvspace{.5\baselineskip}
}
\NewEnviron{optProblem}{
  \problemtitle{}\probleminput{}\problemoutput{}
  \BODY
  \par\addvspace{.5\baselineskip}
  \noindent
  \begin{tabularx}{\textwidth}{@{\hspace{\parindent}} l X c}
    \multicolumn{2}{@{\hspace{\parindent}}l}{\@problemtitle} \\
    \textbf{Input:} & \@probleminput \\
    \textbf{Output:} & \@problemoutput
  \end{tabularx}
  \par\addvspace{.5\baselineskip}
}
\makeatother

\usepackage{rotating}

\makeatletter
\let\org@insert@column\insert@column
\newcommand*{\dummy@insert@column@h}{%
  \begingroup
    \setbox0=\hbox\bgroup\begingroup
      \org@insert@column
    \endgroup\egroup
    \setbox2=\hbox{}%
    \wd2=\wd0 %
    \ht2=\ht0 %
    \dp2=\dp0 %
    \copy2 %
  \endgroup
}
\newcommand*{\dummy@endpbox}{%
  \@finalstrut\@arstrutbox
  \egroup
  \begingroup
    \setbox0=\lastbox
    \setbox2=\hbox{}%
    \wd2=\wd0 %
    \ht2=\ht0 %
    \dp2=\dp0 %
    \copy2 %
  \endgroup
  \hfil
}
\newcommand*{\dummy@classz}{%
  \@classx
  \@tempcnta \count@
  \prepnext@tok
  \@addtopreamble{%
    \ifcase \@chnum
      \hfil \d@llarbegin
      \dummy@insert@column@h
      \d@llarend \hfil
    \or
      \hskip1sp\d@llarbegin
      \dummy@insert@column@h
      \d@llarend \hfil
    \or
      \hfil\hskip1sp\d@llarbegin
      \dummy@insert@column@h
      \d@llarend
    \or
      $\vcenter
      \@startpbox{\@nextchar}\insert@column \dummy@endpbox
      $%
    \or
      \vtop
      \@startpbox{\@nextchar}\insert@column \dummy@endpbox
    \or
      \vbox
      \@startpbox{\@nextchar}\insert@column \dummy@endpbox
    \fi
  }%
  \prepnext@tok
}
\newcommand*{\tabulardummysetup}{%
  \renewcommand*{\cellcolor}[1]{\null}%
  \let\@classz\dummy@classz
}
\newcommand{\tabularfix}[1]{%
  \def\tabularfix@contents{\ignorespaces#1\ifhmode\unskip\fi}%
  \leavevmode
  \rlap{\tabularfix@contents}%
  \begingroup
    \tabulardummysetup
    \tabularfix@contents
  \endgroup
}
\makeatother

\begin{document}

\title[Heuristics for the RLBWT Alphabet Ordering Problem]{Heuristics for the Run-length Encoded Burrows-Wheeler Transform Alphabet Ordering Problem}


\author*[1]{\fnm{Lily} \sur{Major} \,\orcidlink{0000-0002-5783-8432}}\email{jam86@aber.ac.uk}
\author[1]{\fnm{Amanda} \sur{Clare} \,\orcidlink{0000-0001-8315-3659}}\email{afc@aber.ac.uk}
\equalcont{These authors contributed equally to this work.}

\author[1,2,3]{\fnm{Jacqueline W.} \sur{Daykin} \,\orcidlink{0000-0003-1123-8703}}\email{jwd6@aber.ac.uk}
\equalcont{These authors contributed equally to this work.}

\author[4]{\fnm{Benjamin} \sur{Mora} \,\orcidlink{0000-0002-2945-3519}}\email{b.mora@swansea.ac.uk}
\equalcont{These authors contributed equally to this work.}

\author[1]{\fnm{Christine} \sur{Zarges} \,\orcidlink{0000-0002-2829-4296}}\email{chz8@aber.ac.uk}
\equalcont{These authors contributed equally to this work.}

\affil*[1]{\orgdiv{Department of Computer Science}, \orgname{Aberystwyth University}, \orgaddress{\street{Penglais}, \city{Aberystwyth}, \postcode{SY23 3DB}, \country{UK}}}

\affil[2]{\orgdiv{Department of Information Science}, \orgname{Stellenbosch University}, \orgaddress{\street{Merriman Avenue}, \city{Stellenbosch}, \postcode{7602}, \country{South Africa}}}

\affil[3]{
\orgname{Univ Rouen Normandie, INSA Rouen Normandie, Université Le Havre Normandie,
Normandie Univ}, \orgaddress{\street{LITIS UR 4108}, 
\postcode{F-76000 Rouen}, \country{France}}}

\affil[4]{\orgdiv{Computer Science Department}, \orgname{Swansea University}, \orgaddress{\street{Bay campus, Fabian Way}, \city{Swansea}, \postcode{SA1 8EN}, \country{UK}}}

\abstract{ 

The Burrows-Wheeler Transform (BWT) is a string transformation technique widely used  in areas such as bioinformatics and file compression. 
Many applications combine a run-length encoding (RLE) with the BWT in a way which preserves the ability to query the compressed data efficiently.
However, these methods may not take full advantage of the compressibility of the BWT as they do not 
modify the alphabet ordering for the sorting step embedded in computing the BWT. 
Indeed, any such alteration of the alphabet ordering can have a 
considerable impact on the output of the BWT, in particular on the number of runs.
For an alphabet $\Sigma$ containing $\sigma$ characters, the space of all alphabet orderings is of size
$\sigma!$. While for small alphabets an exhaustive investigation is possible, finding the optimal ordering for larger alphabets is not feasible.
Therefore, there is a need for a more informed search strategy than brute-force sampling the entire space, which motivates a new heuristic approach.
In this paper, we explore the non-trivial cases for the problem of minimizing the size of a run-length encoded BWT (RLBWT) via selecting a new ordering for the alphabet.
We show that random sampling of the space of alphabet orderings usually gives sub-optimal orderings for compression and that a local search strategy can provide a large improvement in relatively few steps.
We also inspect a selection of initial alphabet orderings, including ASCII, letter appearance, and letter frequency. While this alphabet ordering problem is computationally hard we demonstrate gain in compressibility.
}

\keywords{Alphabet ordering, Burrows-Wheeler Transform, Compression, Local search, Random sampling, Run-Length Encoding}

\maketitle

This preprint has not undergone peer review or any post-submission
improvements or corrections. The Version of Record of this article is published in Journal of Heuristics, and
is available online at https://doi.org/10.1007/s10732-025-09548-3

\section{Introduction}

The Burrows-Wheeler Transform (BWT), originally known as block-sorting compression, is a text transformation scheme which computes a permutation of an input string of data \citep{ARTICLE:BWT}. The transformation is achieved by sorting the matrix of all circular shifts (rotations) of a string into lexicographic order and extracting the last column from the matrix. The algorithm can be implemented using a suffix array data structure with overall linear time complexity. Furthermore, using the index of the original string in the sorted matrix, the transform can be inverted in linear time, enabling efficient recovery of the input data. Hence the BWT is applicable to lossless compression activities, notably as a pre-processor preparing the data for compression. An important property of the transform is that it groups together characters with similar context, which are often identical characters, that is, it has a tendency to rearrange a string of characters into runs of the same character.

The computational efficiency of this remarkably simple innovation has enabled wide-ranging applications 
related to the indexing, searching, and compression of text \citep{AdjerohBellMukherjee2008}. The BWT is 
implemented in the popular open-source file compressor Bzip2 \citep{bzip2}, as well as in bioinformatic sequence alignment utilities including Bowtie2 \citep{ARTICLE:bowtie2}, BWA \citep{pmid19451168-BWA}, and SOAP2 \citep{Li2009soap2}, and additionally in image compression \citep{Syahrul2008LosslessImage}.

A large amount of research has been conducted to improve the space complexity of the BWT and common data structures used to query the BWT such as the FM-index.
Such improvements include applying run-length encoding (RLE) to the transformed text or other structures.

This was first investigated by  M{\"a}kinen and Navarro as the run-length encoded FM-index (RLFM-index) \citep{Makinen2005succint}
and improved on \citep{Siren2008runlength}, including much more recently by Gagie, Navarro, and Prezza \citep{gagie2020fully}, using a run-length encoded BWT (RLBWT) and suffix array samples ($r$-index).
The $r$-index provides a full-text searchable index in $O(r)$ space, where $r$ is the number of runs in the BWT,
and has been of interest for the finding maximal exact matches step in bioinformatics read alignment \citep{Rossi2022-da}.

The order of the characters in the compressed BWT text heavily relies on the alphabet ordering used to sort the suffixes. 
By varying the alphabet ordering used for the BWT, the output can be influenced to further group characters, improving over the extended ASCII alphabet ordering which is typically implemented in software utilities.

Alphabet orderings for the BWT have been considered by Chapin and Tate \citep{chapin1998higher}, where both a hand-picked ordering and orderings created from a heuristic algorithm were used and tested on both text and imaging data.
It is suggested that placing similar characters together in the alphabet (vowels, consonants, and punctuation) yield greater compression over ASCII in their pipeline involving the BWT.

Related BWT research has been conducted which, for a given ordered alphabet $\Sigma$, investigate a variety of non-lexicographic orderings of $\Sigma^*$, providing tailored methods to order the set of strings rather than reordering the alphabet.
The ABWT is based on alternating lexicographical order \citep{Giancarlo2018block, giancarlo2020alternating}, which flips the order relation between $<$ and $>$ at each subsequent position during the scan of two strings being compared.
The $V$-BWT is based on $V$-order which repeatedly deletes a $V$-type letter and at the penultimate stage of equality applies co-lexorder \citep{daykin2014vorder}.
The $D$-BWT, which is applied to degenerate (also known as indeterminate) strings where each string position consists of a nonempty subset of letters over $\Sigma$, determines a lex-extension to sort the conjugates \citep{Daykin2017-dbwt}.
The binary $B$-BWT applies binary block order and yields not one but twin transforms \citep{DAYKIN2016118-bBWT}.

Ordering texts within a collection has also been examined \citep{Cox2012largescale, cazaux2019xbwt, ARTICLE:complexityBWTRunsMinimizationAlphabetReordering}, which involves adding a unique separator to each text in the collection. A notable result is that finding a minimal number of runs $r$ by ordering of the texts can be done in linear time \citep{cazaux2019xbwt, ARTICLE:complexityBWTRunsMinimizationAlphabetReordering}, though this approach does not perform reordering of the alphabet within each text.
\cite{pibiri2023weighted} has implemented a graph-based algorithm for ordering the biological sequences within a collection such that the k-mers can be efficiently stored in a hash table. The ordering is chosen to ensure that the corresponding k-mer counts can be compressed using RLE while still enabling queries of the k-mer counts.

Reordering the alphabet to minimize $r$ has been shown to be APX-hard \citep{gibney-thesis,ARTICLE:complexityBWTRunsMinimizationAlphabetReordering}.
In addition, Bentley, Gibney, and Thankachan  showed that finding an alphabet ordering to ensure $r < t$ for a given threshold $t$ is NP-complete \citep{ARTICLE:complexityBWTRunsMinimizationAlphabetReordering}.
We also know that $r$ is limited to no more than twice the number of runs in the original text \citep{mantaci2017measuring}, providing a 
bound to the quality of any worst case alphabet ordering.

While the theoretical hardness of choosing the best alphabet ordering is clear, we still know little about the potential to efficiently and substantially improve on ASCII orderings.
In this paper, we first randomly sample widely from the space and then to use heuristic search to understand more about potential improvements to the alphabet ordering which could be made cheaply and quickly, reducing the size of RLBWT compressed texts.

We show that most randomly sampled alphabet orders achieved
little improvement in size reduction.
However, we show that a First-Improvement local search can quickly improve on randomly sampled alphabet orderings, even when using a limited number of steps. 
Additionally we consider initializing the search with promising initial orderings to seed the search, taking into account character frequency and appearance, and showing how they affect the speed of improvement of the search over time. 
We also evaluate a variety of operators for directing the search and find that a combined search strategy of two standard operators \textsc{Swap} and \textsc{Insert} \citep{EibenSmith2015} may improve the local minima depending on the order of their use.
We further consider varied orderings of the neighbors of an alphabet ordering, searching them lexicographically, reverse-lexicographically, and randomly (Section \ref{sec:method-ls}).

Section~\ref{sec:problem} introduces and formalizes the problem and terminology. 
We discuss the special cases for small alphabets in Section \ref{sec:small}, and First-Improvement local search methods for larger alphabets in Section \ref{sec:larger}.
Our experimental setup is discussed in Section~\ref{sec:methods-setup}. A detailed discussion of our results can be found in Section~\ref{sec:results}. We summarize our contribution in Section~\ref{sec:conclusion} and provide an overview of future research directions.

\section{Notation, Problem Definition, and Modeling}
\label{sec:problem}

In this section, we provide formal definitions for BWT and RLE with examples given for the key definitions. We also formally state the considered optimization problem.
For each of the following definitions it is assumed that we have the following:

An \emph{alphabet} $\Sigma$, is an ordered
 non-empty set of unique characters $\{x_0, x_1, \dots, x_{\sigma-1}\}$, where $x_0 < x_1 < \cdots < x_{\sigma-1}$ and $x_i < x_j$ implies $x_i$ precedes $x_j$.
The standard `Roman' alphabet ordering as implemented in the ASCII table is denoted as $\romanAlphabet$.

A \emph{string} \s{s} = $c_0 c_1 \dots c_{n-1} = \s{s}[0..n-1]$ over an \emph{alphabet} $\Sigma$, is a finite sequence of characters of length $n=\norm{\s{s}}$,  such that $c_i \in \Sigma, c_i = \s{s}[i]$. The individual characters $c_i$ will be denoted as $\s{s}[i]$ subsequently.

\subsection{Key Definitions}

A BWT (Definition~\ref{def:bwt}) is computed by first forming a Burrows-Wheeler Matrix (BWM, Definition~\ref{def:bwm}) -- the sorted list of strings formed by all \emph{cyclic rotations} of a string -- and taking the last column of the matrix.
This has a tendency to group identical characters together and is useful for compressing with an RLE.

Alphabet reordering allows us to impact the ordering of the rows within a BWM.
For any \emph{string} \s{s} of length $n$ with an \emph{alphabet} $\Sigma$ of size $\sigma$, there is a total of $\sigma!$ possible alphabet orderings.

\begin{definition}(Burrows-Wheeler Matrix / BWM)
    \label{def:bwm}    
    Let \s{s} and \s{s'} be two strings over the same alphabet $\Sigma$. The string \s{s'} is said to be a \emph{cyclic rotation} of \s{s} if and only if there exists two strings \s{u} and \s{v} with $\norm{\s{v}} = 1$ such that $\s{s}=\s{u}\s{v}$ and $\s{s'}=\s{v}\s{u}$.
    The lexicographically ordered, row-arranged set of all \emph{cyclic rotations} of $\s{s}\$$ sorted according to the order of $\Sigma$, with \$ least in $\Sigma$, and denoted as $BWM(\s{s}, \Sigma)$ is the BWM of $\s{s}$.
    The BWM of \s{s} without adding an implicit \$ is denoted as $BWM_*(\s{s}, \Sigma)$.
\end{definition}

It should be noted that we do not consider the end marker (normally \$) to be movable within the alphabet ordering, instead always appearing first as an implicit least character.
We illustrate Definition~\ref{def:bwm} in the following example. For the sake of simplicity we use the standard `Roman' alphabet ordering $\romanAlphabet$.

\begin{example}
\label{example:bwm-roman}
    Let a string $\s{s} = cacatcg$ and using $\romanAlphabet$, the BWM is as follows, where $F$ and $L$ denote the first and last columns of the matrix respectively:
\[
	BWM(\s{s}, \romanAlphabet) = \begin{vmatrix}
		F & & & & & & & L\\
		\$&c&a&c&a&t&c&g\\
		a&c&a&t&c&g&\$&c\\
		a&t&c&g&\$&c&a&c\\
		c&a&c&a&t&c&g&\$\\
		c&a&t&c&g&\$&c&a\\
		c&g&\$&c&a&c&a&t\\
		g&\$&c&a&c&a&t&c\\
		t&c&g&\$&c&a&c&a\\
	
		\end{vmatrix}
\]
\end{example}

Using Definition~\ref{def:bwm} we can now formally define a BWT as the last column $L$ of the BWM.

\begin{definition}(Burrows-Wheeler Transform / BWT)
\label{def:bwt}

    For an input string \s{s} of length $n$ and an alphabet ordering $\Sigma$, we define the BWT as BWT(\s{s}, $\Sigma$) = BWM(\s{s}, $\Sigma$)[$i$, $n-1$], $\forall i$ such that $ 0 < i < n-1$.
    The BWT of \s{s} without implicitly adding a \$ is denoted as BWT$_*(\s{s}, \Sigma)$.
\end{definition}

The BWT for the string in Example \ref{example:bwm-roman} with the ordering $\romanAlphabet$ is therefore obtained from the last column $L$ of the BWM as $gcc\$atca$. The BWT can be computed from the suffix array (SA) of \s{s} as follows:
\begin{equation}
\nonumber
  L[i]=\begin{cases}
    \$, & \text{if SA[i]=0}\\
    $\s{s}[SA[i]-1]$, & \text{otherwise}.
  \end{cases}
\end{equation}

The suffix array may be computed in linear time \citep{KO2005143, KIM2005126} 
and enables BWT construction with the same time complexity. Inversion of the BWT may be achieved through the last-first mapping \citep{ARTICLE:BWT}.
For a string \s{s}, the BWT(\s{s}, $\Sigma$) may be updated for a modified string \s{s'}, BWT(\s{s'}, $\Sigma$) in sub-linear time \citep{10.1145/3519935.3520061}.
Recent developments have led to an efficient algorithm for directly constructing the RLBWT \citep{nishimoto2022rlbwt} in
$O(n + r \log r)$ time but $O(n)$ time for strings where $r = O(n/\log n)$, and working space $O(r\log n)$ of bits.

\begin{definition}(Run-Length Encoding / RLE)
    \label{def:rle}
    For a string \s{s}, \s{u} is a \emph{substring} of \s{s} if there are possibly empty strings \s{a} and \s{b} such that \s{s} = \s{a}\s{u}\s{b}, and \s{u} is bounded by the indexes ($i$, $j$) where $0 \le i \le j \le n-1$.
    Let $\s{p_0} \s{p_1} \s{p_2} \dots \s{p_{n-1}}$ be a sequence of \emph{substrings} of \s{s} such that \s{s} = $\s{p_0} \s{p_1} \s{p_2} \dots \s{p_{n-1}}$ and $\s{p_i}[j]=\s{p_i}[k]$ $\forall j,k<\norm{\s{p_i}}$ (i.e., each $\s{p_i}$ is a sequence containing identical characters).
    RLE(\s{s}) is the string where each run of identical characters $\s{p_i}$ in \s{s} is replaced by a single copy of that character and a count of the characters in the run.
    We denote the length of each run of characters in superscript, for instance: RLE(\s{s}) = ${\s{p_0}[0]}^{\norm{\s{p_0}}} {\s{p_1}[0]}^{\norm{\s{p_1}}} \ldots {\s{p_{n-1}}[0]}^{\norm{\s{p_{n-1}}}}$.
\end{definition}

We illustrate Definition~\ref{def:rle} with an example using $\romanAlphabet$.
\begin{example}
    Let a string $\s{s} = cacatcg$ using $\romanAlphabet$,
    The output $L$ column of $BWM(\s{s}, \romanAlphabet)$ is $gcc\$atca$. This is encoded as $\textrm{RLE}(BWT(\s{s}, \Sigma)) = g^1c^2\$^1a^1t^1c^1a^1$, $\norm{\textrm{RLE}(BWT(\s{s}, \Sigma))} = 14$.
\end{example}

We are interested in overall memory usage for the RLE, so each run is encoded as the byte (character) and then the length of the run up to 255. Any runs over 255 in length are encoded as additional bytes (Section~\ref{sec:problem-statement}). This is represented as $\norm{RLE(\s{s})}$.

To demonstrate the influence an alphabet ordering can have on the BWM, BWT and RLE, we consider the same example string using an alternative alphabet ordering.

\begin{example}
    Let a string $\s{s} = cacatcg$, $\Sigma = \$ < a < g < c < t$.
\[
	BWM(\s{s}, \Sigma) = \begin{vmatrix}
		F & & & & & & & L\\
		\$&c&a&c&a&t&c&g\\
		a&c&a&t&c&g&\$&c\\
		a&t&c&g&\$&c&a&c\\
	    g&\$&c&a&c&a&t&c\\
        c&a&c&a&t&c&g&\$\\
		c&a&t&c&g&\$&c&a\\
		c&g&\$&c&a&c&a&t\\
		t&c&g&\$&c&a&c&a\\
		\end{vmatrix}
\]
With this new ordering $\Sigma$, $\textrm{RLE}(BWT(\s{s}, \Sigma)) = g^1c^3\$^1a^1t^1a^1$, $\norm{\textrm{RLE}(BWT(\s{s}, \romanAlphabet))} = 12$.
\end{example}

\begin{definition}($r$)
    The number of maximal length runs in the BWT.
    For example, let $\s{s} = acacacbbacbac$ and $\Sigma = a < b < c$.
    BWT$_*(\s{s}, \Sigma) = bccbccbcaaaaa$. As there are 7 runs in the BWT output, $r = 7$.
\end{definition}

The $r$ value for any BWT of a string is not necessarily less than the $r$ value of a string itself. A classic example is the string $mississippi\$$ where the BWT is $ipssm\$pissii$ which also shows that computing the BWT permutation of the input data does not necessarily make the data more compressible.

\subsection{Problem Statement}
\label{sec:problem-statement}

We wish to investigate the sample space of alphabet orderings. Since reordering the alphabet to minimize the number of runs has been shown to be APX-hard, we will use and investigate different heuristics.

We evaluate the fitness of any new alphabet ordering $\Sigma$ by the total length of its RLE.
As we use the input data bytewise, the `characters' in our alphabet are these bytes and not another encoding such as UTF.
We do not use the total number of identical character runs in the BWT($r$) as the fitness since we are interested in overall memory consumption of the representation.
Each input file is read bytewise, so multi-byte (non-ASCII) characters are represented as more than one `character' in the alphabet.

We encode the RLE as a sequence of byte pairs, with the first byte of each pair representing the run's character, and the second byte representing the length of the run.
As the maximum value that may be represented in a byte is 255, any larger runs will be represented with multiple pairs of bytes.
We therefore seek to minimize the size of RLE for our tested texts.

\begin{optProblem}
  \problemtitle{\textsc{RLBWT Alphabet Ordering Problem}}
  \probleminput{A string \s{s} of length $n$ over an alphabet $\Sigma$ of size $\sigma$}
  \problemoutput{A reordering of $\Sigma$ for $RLE(BWT(\s{s}, \Sigma))$ that minimizes $\norm{RLE(BWT(\s{s}, \Sigma))}$.}
\end{optProblem}

\section{Methods for Small Alphabets}
\label{sec:small}

\subsection{Binary Alphabet Orderings}

We consider the simplest non-trivial case where there are only 2 characters and show that when using the reverse of an alphabet ordering that $r$ remains the same for both orderings for primitive strings.

\begin{lemma}
Let \s{s} be a binary string over $\Sigma = \{a, b\}$ of length $n$.
Let $\Sigma_1 = a < b$, $\Sigma_2 = b < a$.
Then $r(\s{s}, \Sigma_1) = r(\s{s}, \Sigma_2)$.
\end{lemma}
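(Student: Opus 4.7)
The plan is to show that $BWT_*(\s{s}, \Sigma_2)$ is precisely the reversal of $BWT_*(\s{s}, \Sigma_1)$ as a string. Once this is established, the lemma follows immediately, since a string and its reversal list the same maximal runs in opposite order and therefore contain the same number of runs.

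The key observation is that reversing a binary alphabet reverses the lexicographic comparison of any two distinct equal-length strings. Concretely, for any two distinct cyclic rotations $u, v$ of $\s{s}$, both have length $n$, so neither is a prefix of the other and there is a first position $i$ with $u[i] \neq v[i]$. Under $\Sigma_1$ we have $u < v$ iff $u[i] = a$, while under $\Sigma_2$ the reverse holds. Hence the relative order of every pair of distinct rotations is flipped, so the rows of $BWM_*(\s{s}, \Sigma_2)$ are listed in the reverse order of the rows of $BWM_*(\s{s}, \Sigma_1)$.

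The one subtlety is the non-primitive case, in which some cyclic rotations coincide as strings and a sort may break ties arbitrarily. Identical rows contribute the same character to the last column regardless of their order, so any such tie-breaking leaves the last column invariant. Reading off the last columns therefore yields $BWT_*(\s{s}, \Sigma_2)$ as the character-wise reversal of $BWT_*(\s{s}, \Sigma_1)$, and applying the observation that reversal preserves run count completes the proof. The main obstacle is simply the bookkeeping around identical rotations; once one notices that they leave $L$ unchanged, the argument goes through uniformly for both primitive and non-primitive $\s{s}$.
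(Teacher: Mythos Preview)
Your proof is correct and follows essentially the same approach as the paper: both argue that reversing the alphabet reverses every pairwise comparison of distinct rotations, so the rows of $BWM_*$ appear in reverse order and the last column is the string reversal of the original BWT, which preserves the run count. Your treatment of the non-primitive case (identical rows contribute the same character to $L$ regardless of tie-breaking) is a slightly different but equally valid phrasing of the paper's observation that blocks of identical rows remain intact under the flip.
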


\begin{proof}
Suppose \s{s} is primitive, then all of its conjugates are distinct \citep{PETERSEN1996141-propertiesofprimitivewords}. Let \s{p} = $p_0 \dots p_{n-1}$ and \s{q} = $q_0 \dots q_{n-1}$ be two adjacent rows in BWM$_*(\s{s}, \Sigma_1)$ such that \s{p} is lexicographically less than \s{q}. Assume that \s{s} contains two distinct characters (otherwise the claim is trivial), then let $t$ be the minimal index such that $p_t \neq q_t$, thus $p_t = a < b = q_t$. On the other hand in BWM$_*(\s{s}, \Sigma_2)$, \s{q} is lexicographically less than \s{p}. The argument holds for all pairs of adjacent rows showing that the two matrices are flipped row-wise, hence have the same $r$ value.\\
\indent In the case that \s{s} is not primitive and has the form $\s{u}^k$, $k>1$, then all groups of $k$ identical and adjacent rows in BWM$_*(\s{s}, \Sigma_1)$ will likewise be adjacent in BWM$_*(\s{s}, \Sigma_2)$ after the flipping.
\end{proof}

Observe that using transitivity the argument on reversing an alphabet extends to an arbitrary finite alphabet which motivates our search for effective orderings. We illustrate concepts with the following ternary example:

\begin{example}(Ternary primitive)

Let \s{s} = $aabbcc$, and $\Sigma_1 = a < b < c$:
\[
	BWM_*(\s{s}, \Sigma_1) = \begin{vmatrix}
		F & & & & & L\\
		a & a & b & b & c & c\\
        a & b & b & c & c & a\\
        b & b & c & c & a & a\\
        b & c & c & a & a & b\\
        c & a & a & b & b & c\\
        c & c & a & a & b & b\\
		\end{vmatrix}
\]
BWT$_*(\s{s}, \Sigma_1)$ = $caabcb$, $r(\s{s}, \Sigma_1)$ = 5.

In the case with $\Sigma_2 = c < b < a$:
\[
	BWM_*(\s{s}, \Sigma_2) = \begin{vmatrix}
		F & & & & & L\\
	    c & c & a & a & b & b\\   
        c & a & a & b & b & c\\
        b & c & c & a & a & b\\
        b & b & c & c & a & a\\        
        a & b & b & c & c & a\\
        a & a & b & b & c & c\\
		\end{vmatrix}
\]
BWT$_*(\s{s}, \Sigma_2)$ = $bcbaac$, $r(\s{s}, \Sigma_1)$ = 5.

Thus both orderings have the same $r$ value.
\end{example}

\begin{example}(Ternary non-primitive)

    Let \s{s} = abcabc, and $\Sigma_1 = a < b < c$:
    \[
	BWM_*(\s{s}, \Sigma_1) = \begin{vmatrix}
		F & & & & & L\\
	    a&b&c&a&b&c\\
        a&b&c&a&b&c\\
        b&c&a&b&c&a\\
        b&c&a&b&c&a\\
        c&a&b&c&a&b\\
        c&a&b&c&a&b\\
		\end{vmatrix}
    \]

    Likewise for $\Sigma_2 = c < b < a$ it is trivial to see that the $r$ value of both orderings is the same.

    \[
	BWM_*(\s{s}, \Sigma_2) = \begin{vmatrix}
		F & & & & & L\\
        c&a&b&c&a&b\\
        c&a&b&c&a&b\\
        b&c&a&b&c&a\\
        b&c&a&b&c&a\\
        a&b&c&a&b&c\\
        a&b&c&a&b&c\\
		\end{vmatrix}
    \]
\end{example}

\subsection{Exhaustive Search on Biological Data}
\label{sec:results-exhastive-search}

For small alphabets it may be feasible to search through all possible alphabet orderings to find the one(s) that  provide(s) the best RLBWT compression of the data.

For example, in the case of genomic data, such as a collection of the genome sequences of many \textit{E. coli} bacteria, we could expect a limited 4-letter alphabet \{$a$,$c$,$g$,$t$\}  representing nucleotides  giving 24 possible alphabet orderings. 
For such a collection, the genomes would share much in common, and the RLBWT of a concatenation of these sequences should capture the commonalities effectively. 
However, almost all bioinformatics algorithms choose to use the ASCII ordering $a < c < g < t$, even though this may not provide the best results.

We took a collection of 150 diverse \textit{E. coli} genomes from NCBI and compared the compression obtained using each possible alphabet ordering. 
The ordering $t < c < a < g$ provided the best compression (-91.627\% change), and the worst ordering was $c < g < t < a$ (-91.576\% change) -- for further details see Table \ref{tab:150-ecoli-stats}. 
Although an exhaustive search through all possible alphabet orderings is often prohibitively expensive, this example demonstrated that better choices can make improvements, and motivated our investigation to find such orderings and examine the search space of orderings for its properties.

\begin{table}[h]
\centering
\rowcolors{1}{}{tablebackgroundcol}
\caption{Percentage difference in file size using each of the 24 alphabet orderings for the alphabet \{a,c,g,t\} for the 150 concatenated \textit{E.coli} data files when using the RLBWT.}
\label{tab:150-ecoli-stats}
\begin{tabular}{|S[detect-weight, table-format=-1.3]|S[detect-weight, table-format=-1.3]|S[detect-weight, table-format=-1.3]|S[detect-weight, table-format=1.3]|S[detect-weight, table-format=-1.3]|S[detect-weight, table-format=-1.3]|S[detect-weight, table-format=1.3]|}
\hline
{Min \% Change} & {Max \% Change}  & {Mean} & {Std}\\
\hline
-91.576&-91.627&-91.597&0.02\\
\hline
\end{tabular}%
\end{table}

\section{Methods for Larger Alphabets}
\label{sec:larger}

Since the permutation space of the alphabet ordering for an alphabet of size $\sigma$ is $\sigma!$, in most practical cases, exhaustive enumeration of all alphabet orderings is not a feasible approach. We therefore consider different variants of a First-Improvement local search for larger alphabets. We use different types of texts, considering a variety of different text lengths and alphabet sizes. The main goal of our experimental analysis is to provide insights into the working principles of the considered methods for the given problem and to provide guidelines for their use. The considered algorithms are introduced in Sections~\ref{sec:method-sampling} and \ref{sec:method-ls}.

\subsection{Baseline: Random Sampling}
\label{sec:method-sampling}

We use uniform random sampling as a baseline approach to inspect the statistical distribution of potential compression gains that could be made by reordering the alphabet. The results are compared with the results of First-Improvement local search as described in the next section.

An interesting property of random sampling is that the mean number of improvements to be expected is actually bounded by the logarithm of number of trials. Indeed, as every new sample (i.e., alphabet order) is independent of the previous samples, the chance of obtaining a better compression after $T$ sampling steps with a new sample would be $P_{improving(T)}=\frac{1}{T+1}$, should the compression value obtained be unique for every sampling event. Therefore, the expected number of successive improvements in compression using $T$ random samples would be simply given by $\sum_{i=0}^{T} \frac{1}{i+1} = O(\log{T})$. In practice, the compression value will belong to a limited set of integer values, and each new sample result may just be equal to the best compression value obtained so far. As such, the actual chance $P_{improving}(T)$ after $T$ samples to improve compression is less than $\frac{1}{T+1}$, and $O(\log{T})$ becomes an upper bound for the total number of improvements expected from $T$ random samples.

\subsection{First-Improvement Local Search}
\label{sec:method-ls}

We consider a variant of First-Improvement Local Search as our main optimisation approach. Pseudocode for this approach is given in Algorithm~\ref{algo:local-search}. The algorithm takes as input a text and its alphabet. It starts from some initial alphabet ordering $\pi$ (line 3) and tries to improve the ordering until a provable local optimum is reached (line 15). In each loop it considers all neighbors of the current ordering $\pi$ in a given order (line 7). If an improvement is found (line 8), the algorithm moves to the better ordering (line 10) and repeats the process (updating the neighborhood as needed, line 11). The fittest ordering is returned.

In our experiments, we consider 9 different initialization methods in line 3 of Algorithm~\ref{algo:local-search} as discussed in Section~\ref{sec:method-ls-init}. We also consider 12 different neighborhoods as discussed in Section~\ref{sec:method-ls-neighbors}, leading to a total number of 108 algorithm configurations.

\begin{algorithm}
\caption{First-Improvement Local Search}
\label{algo:local-search}
\begin{algorithmic}[1]
\State \textbf{Input:} A string $\s{w}$ over an alphabet $\Sigma$ of size $\sigma$
\State \textbf{Output:} An ordering $\pi$ of $\Sigma$
\State $\pi \leftarrow \textsc{initialiseOrder}(\Sigma)$ \Comment{Initialisation (Section~\ref{sec:method-ls-init})}\;
\State $N \leftarrow \textsc{initialiseNeighborhood}(\pi)$ \Comment{Neighbors of $\pi$ (Section~\ref{sec:method-ls-neighbors})}\;
\Do
    \State improvementFound = false\;
    \For{all $\pi'\in N$}
        \If{$f(\pi') < f(\pi)$}\Comment{Neighbor is improvement}
            \State improvementFound = true\;
            \State $\pi \leftarrow \pi'$\;
            \State $N \leftarrow \textsc{updateNeighborhood}(\pi)$\;
            \State \textbf{break}\;
        \EndIf
    \EndFor
\doWhile{(improvementFound)}
\State \Return $\pi$\;
\end{algorithmic}
\end{algorithm}

\subsubsection{Initialization}
\label{sec:method-ls-init}

We consider the following 9 initialization methods in line 3 of Algorithm~\ref{algo:local-search}. With the exception of random initialization, all methods are deterministic and based on some heuristic or standard ordering from the literature. 

\begin{itemize}
    \item \textbf{Random:} We determine 20 random orderings using Fisher-Yates shuffle~\citep{alma99376173402418-fisher-yates}. The same 20 orderings are used for all experiments using random initialization.
    \item \textbf{ASCII:} The (extended) ASCII ordering of the alphabet.
    \item \textbf{First Appearance:} Characters are ordered by their order of appearance in the text.
    \item \textbf{Least Frequent:} Characters are ordered by the number of occurrences in the text, least frequent first.
    \item \textbf{Most Frequent:} Characters are ordered by the number of occurrences in the text, most frequent first.
    \item \textbf{Chapin-Tate:} A hand-tuned ordering used by Chapin and Tate \citep{chapin1998higher}\footnote{And personal communications with B. Chapin} for a similar compression problem based around the BWT.

    The ASCII alphabet ordering, however `@' replaces `!', `+,-.' is rearranged to `+-,.', AEIOU and aeiou are both brought to the front of each block of both upper and lower case. The consonants are reorganised as $B < C < D < G < F < H < R < L < S < M < N < P < Q < J < K < T < W < V < X < Y < Z.$

   


    \item \textbf{Inverse Permutation Chapin-Tate:}  The inverse permutation \citep{knuthArt} of the Chapin-Tate ordering, spacing out the vowels and reordering the consonants. Since grouping vowels is important for the Chapin-Tate ordering we investigate the effects of not doing this.

    The vowels are ordered as $A < I < O < U < E$, however they are interspersed through the ordering instead of being together at the start of the ordering.
    This results in an ordering of $A < F < G < H < B < J < I < K < C < S < T < M < O < P < D < Q < R < L < N < U < E < W < V < X < Y < Z$.
    Other changes to the ordering such as `!' and `@' being swapped, and the `+-,.' rearrangement are also present in this ordering.
    
    \item \textbf{Vowels}: Vowels (aeiouAEIOU) are placed at the beginning of the ordering. Similar to the Chapin-Tate ordering, the main aim is to explore if grouped punctuation and the consonant reordering really helps the problem, or if moving the vowels alone will yield a better initial ordering.
    
    \item \textbf{FDA:} The FDA algorithm was introduced to determine an alphabet ordering for a variant of the Lyndon factorization problem~\citep{10.1007/978-3-030-58112-1_27}. The main motivation to include this method is to explore if these orderings might be more generally useful for alphabet ordering problems on strings.
    
    It should be noted that FDA determines a partial alphabet ordering. A total ordering is produced from the partial ordering by topological sort.
\end{itemize}

For each of these orderings, we place the selected end marker character (usually \$) as least in the ordering when performing the BWT.

\subsubsection{Local Search Neighborhoods}
\label{sec:method-ls-neighbors}

For the neighborhoods in lines 4 and 11 of Algorithm~\ref{algo:local-search} we consider combinations of two standard operators for permutation sample spaces, namely \textsc{Swap} (aka \textsc{Exchange}) and \textsc{Insert} (aka \textsc{Jump})~\citep{EibenSmith2015}. \textsc{Swap} picks two integers $0 \leq i < j \leq \sigma-1$ and swaps the characters at positions $i$ and $j$. \textsc{Insert} picks two integers $0 \leq i, j \leq \sigma-1$ with $i \neq j$. It moves the character at position $i$ to position $j$, shifting all subsequent characters to the right.

Looking at the two operators in isolation we observe that \textsc{Swap} yields a neighborhood of size $\sigma(\sigma-1)/2$ while \textsc{Insert} yields a neighborhood of size $\sigma(\sigma-1)$. Both neighborhood sizes are quadratic in $\sigma$, the size of the alphabet.

We first consider both operators in isolation and investigate three different orderings of the neighbors in the neighborhoods:
\begin{itemize}
    \item \textbf{Random Order:} Using a random order of the neighbors is the most common approach.
    \item \textbf{Lexicographic Order (\textsc{Lex}):} We hypothesize that it maybe be beneficial to first fix characters at the start of the ordering. We therefore consider the fixed lexicographic order of neighbors. For example, for \textsc{Swap} we consider $i$-$j$-pairs in the following order: $(0,1), (0, 2), ..., (0, \sigma-1), (1, 2), ..., (\sigma-2, \sigma-1)$.
    Any unspecified neighborhood ordering should be assumed to be \textsc{Lex}.
    \item \textbf{Reverse Lexicographic Order (\textsc{RevLex}):} We consider the opposite case by reversing the lexicographic order given above.
\end{itemize}

Finally, we consider a combination of \textsc{Swap} and \textsc{Insert}. More precisely, we first try all possible \textsc{Swap}s followed by all possible \textsc{Insert}s and vice versa. For each of the two lists of neighbors we consider all three orders defined above, ensuring that all \textsc{Swap}s are sorted before all \textsc{Insert}s (and vice versa) as appropriate.
If a combination of operators is used, the second operator will only be used until an improvement is found. The algorithm then returns to using the first operator.

\section{Results and Discussion}
\label{sec:results}

\subsection{Experimental Setup}
\label{sec:methods-setup}
All our experiments are run on Super Computing Wales\footnote{\url{https://portal.supercomputing.wales/index.php/about-sunbird/}}
on a single node (2x Intel(R) Xeon(R) Gold 6148 CPU @ 2.40GHz with 20 cores each) and Aberystwyth DCS cluster on a single node (2x Intel(R) Xeon(R) Gold 6248R CPU @ 3.00GHz with 48 cores each).
As usual, we report the number of objective function evaluations rather than wall-clock time.

For each stochastic variant of First-Improvement local search, we report statistics on the results of 20 independent runs. However, to avoid the distortion of our results due to different random starting points, all variants with random initialization use the same 20 starting points which were randomly determined prior to running our experiments. This way we can analyze the effect of different neighborhoods on the same starting points without introducing additional variables. 
For each run we report the number of function evaluations as `steps'.

We define $C$ as the percentage change in file size relative to the uncompressed size (measured as a percentage in bytes):
\begin{equation}
\label{eq:percent-change}
C = \left(\dfrac{\text{Compressed Size} - \text{Uncompressed Size}}{\text{Uncompressed Size}}\right) \cdot 100
\end{equation}
We present $C$ as raincloud plots \citep{10.12688/wellcomeopenres.15191.2} (a combination of a distribution, boxplot, and jittered point cloud) to give an indication of the density and shape of the sample space. A negative value for $C$ demonstrates a reduction in size while a positive value demonstrates an increase in size. The smaller the value for $C$ the better the compression. The number of steps presented is hitting time and not exhaustive checking of the neighbors.

\subsection{Benchmarking}
\label{sec:results-corpus-description}

We use a standard benchmark for data compression for our analysis, namely the Canterbury corpus \citep{582019}. Table~\ref{tab:cantbry-corpus-files} lists the different files contained in the corpus, including the size of each file and the corresponding alphabet size.
The file \texttt{kennedy.xls} is excluded from our experiments for technical reasons: our BWT implementation relies on a unique end marker character and the alphabet size of 256 leaves no available character if the file is used bytewise. 
We reorder the alphabet by mapping the input text characters to new ones based on the order of characters in the alphabet. We then use the SAIS suffix array implementation by Yuta Mori\footnote{\url{https://sites.google.com/site/yuta256/sais}} to compute the suffix array \citep{5582081, KO2005143}. 
Another implementation is also provided in our repository but was not used due to speed.

We do not run to completion for some files in the corpus (for example: ptt5, sum, xargs.1) for some of the \textsc{Swap then Insert} and \textsc{Insert then Swap} methods due to prohibitive runtimes. Instead we run to a limit of 10,000 steps as this is more than the maximum number of steps to outperform a random sample (Table.~\ref{tab:local-search-steps-to-beat-random}). A full list of the files and methods run until 10,000 steps only is available in our repository\footnote{\url{https://github.com/jam86/Heuristics-for-the-Run-length-Encoded-Burrows-Wheeler-Transform-Alphabet-Ordering-Problem}}.

\begin{table}[tb]
    \centering
    \rowcolors{1}{}{tablebackgroundcol}
    \tabularfix{
    \begin{tabular}{|l|l|S[table-format=7.0]|l|}
        \hline
        File & Description & {Bytes} & Alphabet \\
        \hline
        \texttt{alice29.txt} & The text of Alice's Adventures in Wonderland & 152089 & 74\\
        \texttt{asyoulik.txt} & Text from Shakespeare's play As You Like It & 125179 & 68\\
        \texttt{cp.html} & HTML with a large number of links & 24603 & 86\\
        \texttt{fields.c} & C source code & 11150 & 90\\
        \texttt{grammar.lsp} & LISP source code& 3721 & 76\\
        \texttt{kennedy.xls} & Microsoft Excel document & 1029744 & 256\\
        \texttt{lcet10.txt} & Conference Proceedings & 426754 & 84\\
        \texttt{plrabn12.txt} & Text from John Milton's Paradise Lost & 481861 & 81\\
        \texttt{ptt5} & Fax data & 513216 & 159\\
        \texttt{sum} & Sun SPARC executable & 38240 & 255\\
        \texttt{xargs.1} & GNU Man page for xargs & 4227 & 74\\
        \hline
    \end{tabular}
    }
    \caption{Files in the Canterbury corpus with their size in bytes and the number of unique bytes in their alphabet}
    \label{tab:cantbry-corpus-files}
\end{table}

\subsection{Randomly Sampled Alphabet Orderings}
\label{sec:results-random-sampling}

We inspected the landscape of percentage change in compression that can be achieved using the RLBWT by changing the alphabet order, and sampled 240,000 alphabet orders uniformly at random (by Fisher-Yates shuffle) for each of the texts. 
We do this to learn more about the shape of the sample space and to understand whether there are many best orderings to be found or few.

The number of samples was chosen since it was large but remained tractable to compute on a local machine (Intel(R) i7-8700K CPU @ 3.70GHz with 12 cores). 
To exemplify our findings, the results for files \texttt{alice29.txt},  \texttt{sum} and \texttt{fields.c} are shown in 
Fig.~\ref{fig:random-sample-cantrbry}.
The number of samples may be few in comparison with the very large $\sigma!$ space, however due to the smooth  overall distributions without  outliers  we can see that the sampling already gives a clear picture of the shape of the space from which further random samples would be obtained. 
The distributions are bell-shaped but not normally distributed (scistats.normaltest \cite{2020SciPy-NMeth}), having a long thin tail downwards where better solutions can be found.

These distributions demonstrate a spread of percentage compression for different alphabet orders, with the majority being sub-optimal choices and the sample space having only a thin tail of better choices.
These figures also highlight the surprisingly good compression achieved by the ASCII ordering, lying far below most random choices of alphabet order, even for executable files such as \texttt{sum}. The full set of figures for all corpus files can be seen in our repository.

\begin{figure}
    \centering
        \includegraphics[width=0.6\textwidth]{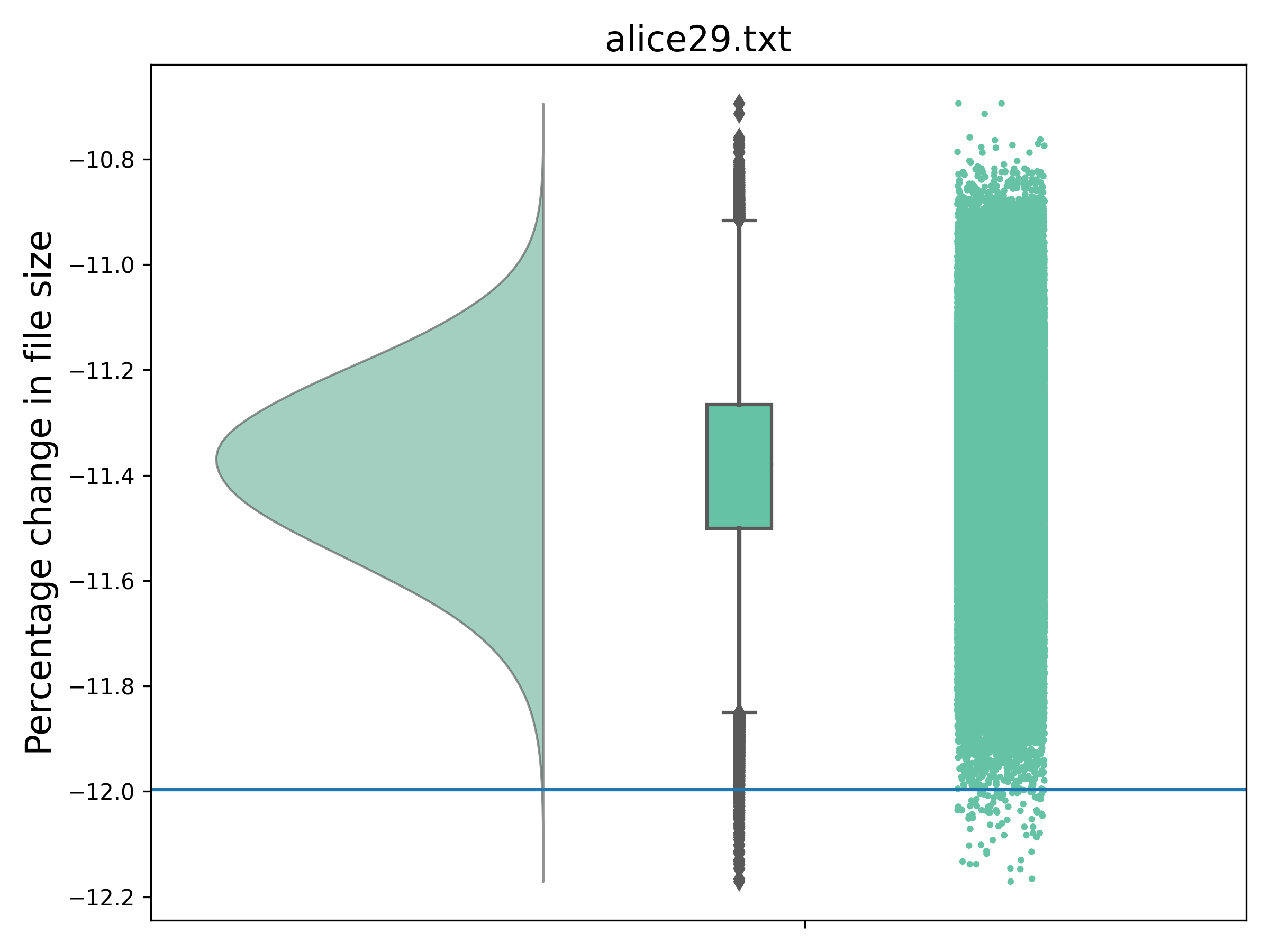}\\
        \includegraphics[width=0.6\textwidth]{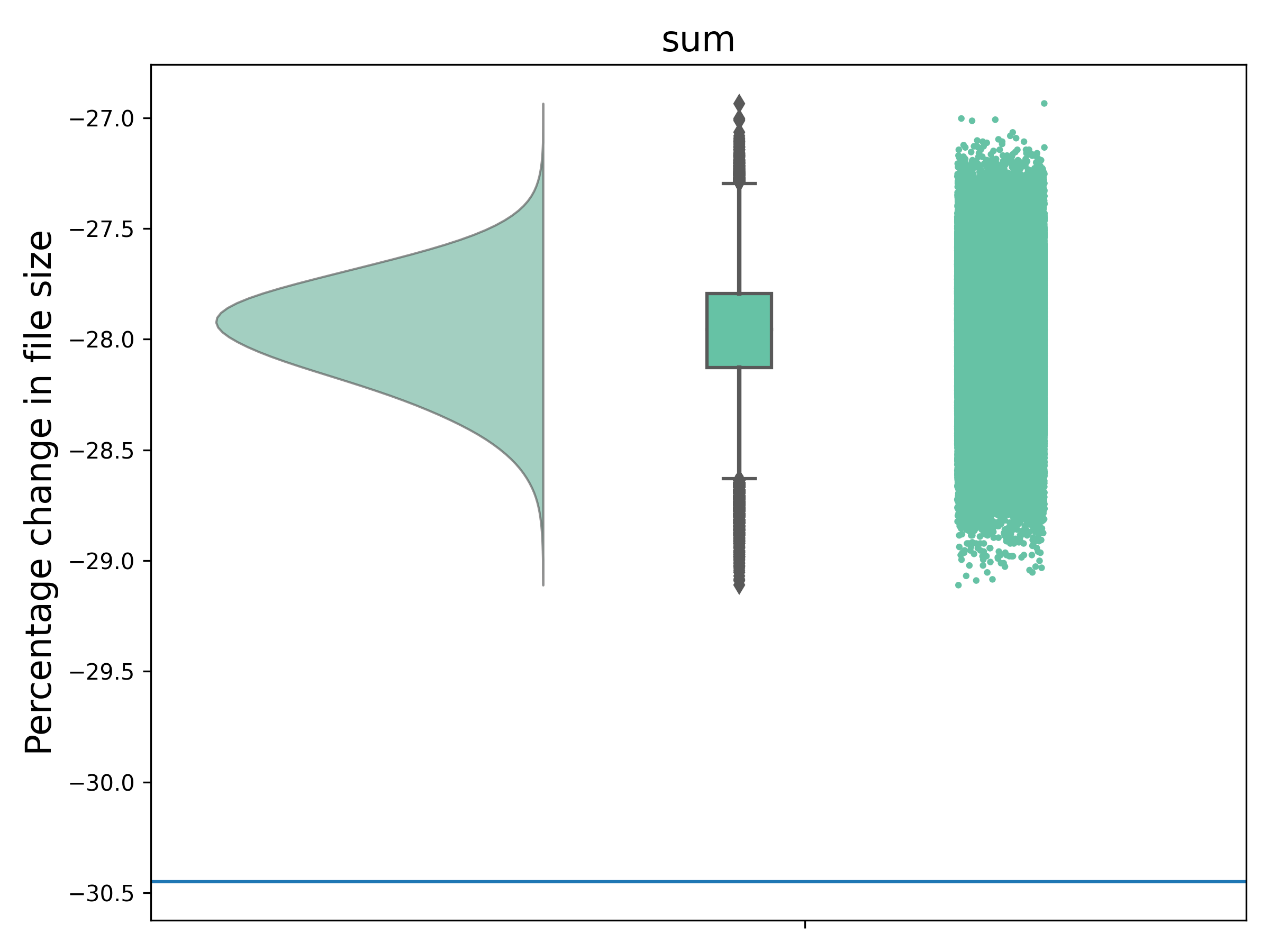}\\
        \includegraphics[width=0.6\textwidth]{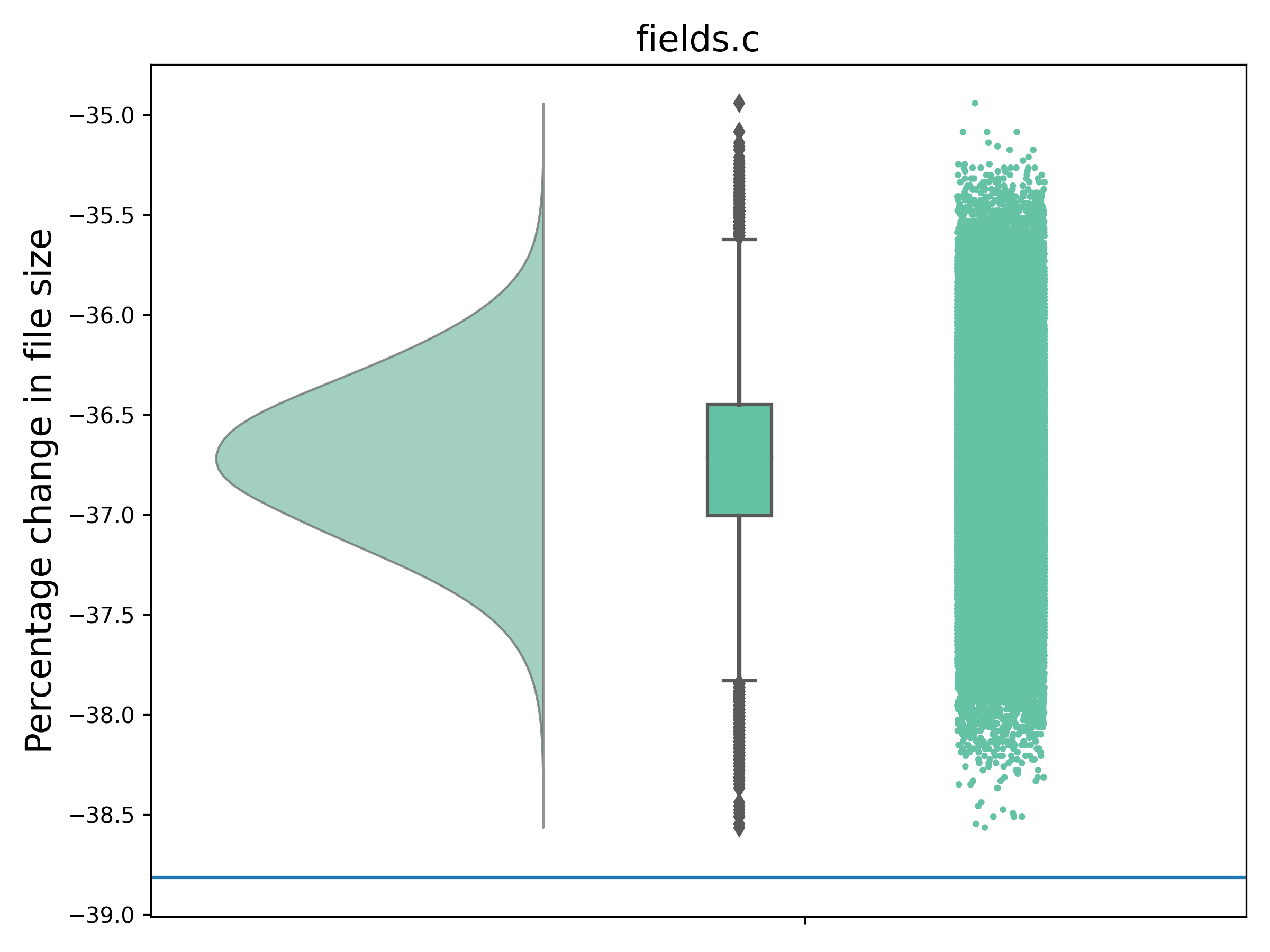}

    \caption{A raincloud plot of the percentage change in compression for 240,000 random samples of alphabet orderings used with the RLBWT for three of the corpus texts. Similar shaped distributions can be seen for the novel  \texttt{alice29.txt}, SPARC executable file \texttt{sum} and C source file \texttt{fields.c}. Horizontal blue line represents the ASCII ordering, which outperforms most of the randomly sampled orderings.}
    \label{fig:random-sample-cantrbry}
\end{figure}

However, most texts in the corpus can be compressed to be smaller than the original by using the RLBWT, despite the fact that most randomly chosen alphabets are poor choices (See Tab.~\ref{tab:random-sample-search-percent-change}). 
This is true even with the worst choice of alphabet order.
The best alphabet order found when randomly sampling for \texttt{ptt5}
reduced the file size by 74.207\%. 
The file \texttt{plrabn12.txt} does not compress well and the best alphabet order sampled for this file increased the size by 1.019\%, and ASCII performs worse than the best randomly sampled alphabet order.

While for most files, randomly sampled alphabet orderings may compress the size of the file somewhat, few sampled orderings improve on the ASCII alphabet ordering.
In fact, several of the initialization ordering methods (Sec \ref{sec:method-ls-init}) also already outperform even the best of the randomly sampled orderings for many of the files. 
The heatmap in Figure \ref{fig:heatmap_inits} shows the rankings of the initialization orderings for the different files. 
The benefits of ASCII and Chapin-Tate orderings can be seen clearly in this figure. 
It can also be seen from this figure that even after randomly sampling 240,000 orderings, the best of these is not good enough.
Random sampling is therefore a too costly search strategy and necessitates  another solution.

\begin{figure}
    \centering
    \includegraphics[width=0.9\textwidth]{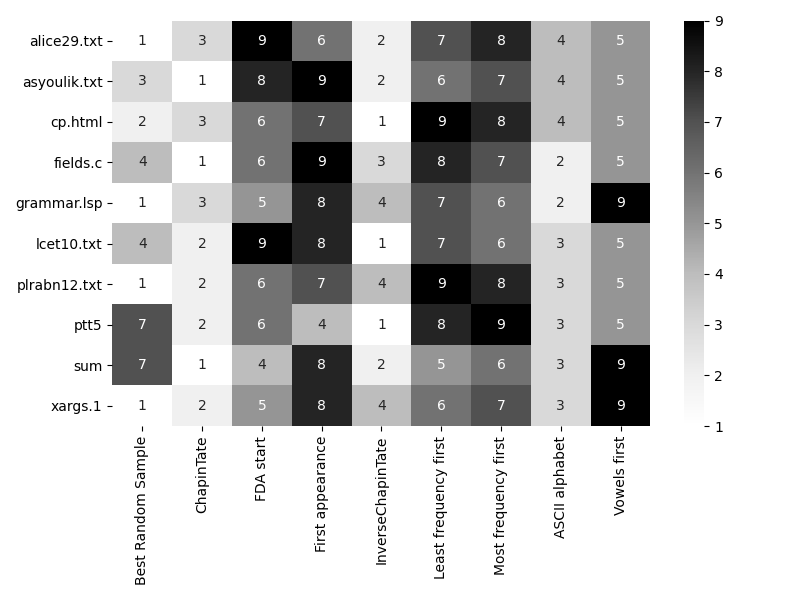}
    \label{fig:heatmap_inits}
    \caption{Ranking of each initialization method and the best randomly sampled ordering for each file before local search is applied.}
\end{figure}

\begin{table}
\centering
\rowcolors{1}{}{tablebackgroundcol}
\caption{Percentage difference compared to the original file size for each file when using the 240,000 orderings found using random sampling. The minimum, maximum, and mean percentage changes are shown for each text.}
\label{tab:random-sample-search-percent-change}
\tabularfix{
\begin{tabular}{|l|S[detect-weight, table-format=-2.3]|S[detect-weight, table-format=-2.3]|S[detect-weight, table-format=-2.3]|S[detect-weight, table-format=1.3]|}
\hline
File & {Min \% Change} & {Max \% Change} & {Mean} & {Std}\\
\hline
\texttt{alice29.txt}&-12.171&-10.694&-11.385&0.172\\
\texttt{asyoulik.txt}&-0.533&1.028&0.324&0.177\\
\texttt{cp.html}&-25.375&-23.156&-24.163&0.255\\
\texttt{fields.c}&-38.565&-34.942&-36.725&0.407\\
\texttt{grammar.lsp}&-28.353&-21.903&-24.955&0.711\\
\texttt{lcet10.txt}&-22.116&-21.14&-21.554&0.113\\
\texttt{plrabn12.txt}&1.019&2.073&1.633&0.109\\
\texttt{ptt5}&-74.207&-73.664&-73.914&0.080\\
\texttt{sum}&-29.111&-26.935&-27.967&0.251\\
\texttt{xargs.1}&-6.269&0.781&-2.484&0.735\\
\hline
\end{tabular}%
}
\end{table}

\subsection{Improved Fitness with Local Search}
\label{sec:results-local-search}

In contrast to the compression obtained via random sampling of alphabet orders (Sec.~\ref{sec:results-random-sampling}), even the most simple of our local search methods -- \textsc{Swap}, \textsc{Lex} -- can achieve a better fitness value for our tested data. 
Fig.~\ref{fig:random-sample-cantrbry-combined} shows the contrast between the best solutions found during a local search (using the \textsc{Swap} operator) and the results of the random sampling. This figure shows \texttt{alice29.txt}, \texttt{sum} and \texttt{fields.c}. 
Other corpus files have similar plots, which can be seen in our repository. The large gap between their distributions indicates that there are excellent alphabet orderings that have not previously been sampled, even when sampling 240,000 orderings.

\begin{figure}
    \centering
        \includegraphics[width=0.6\textwidth]{"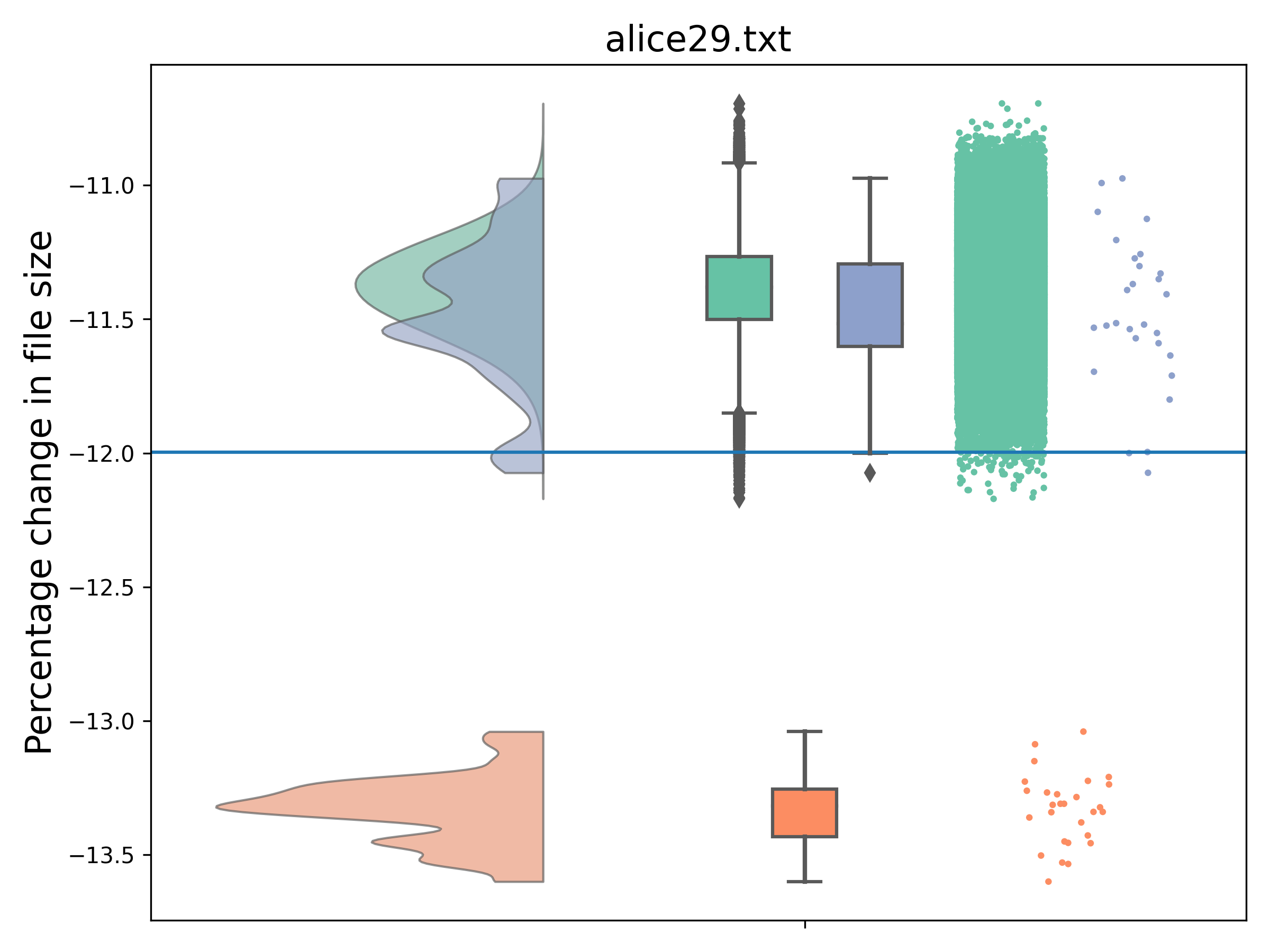"}
        \includegraphics[width=0.6\textwidth]{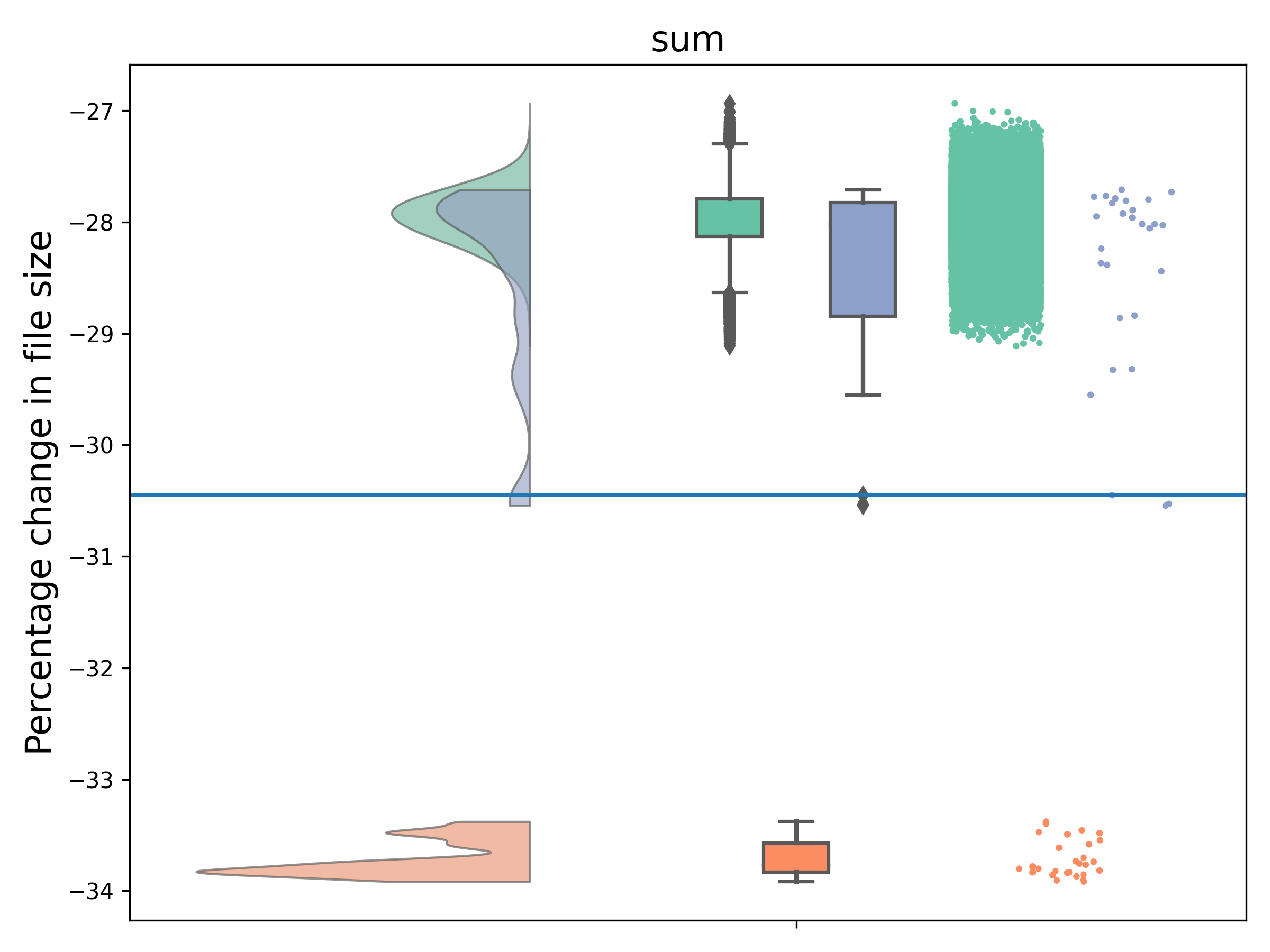}
        \includegraphics[width=0.6\textwidth]{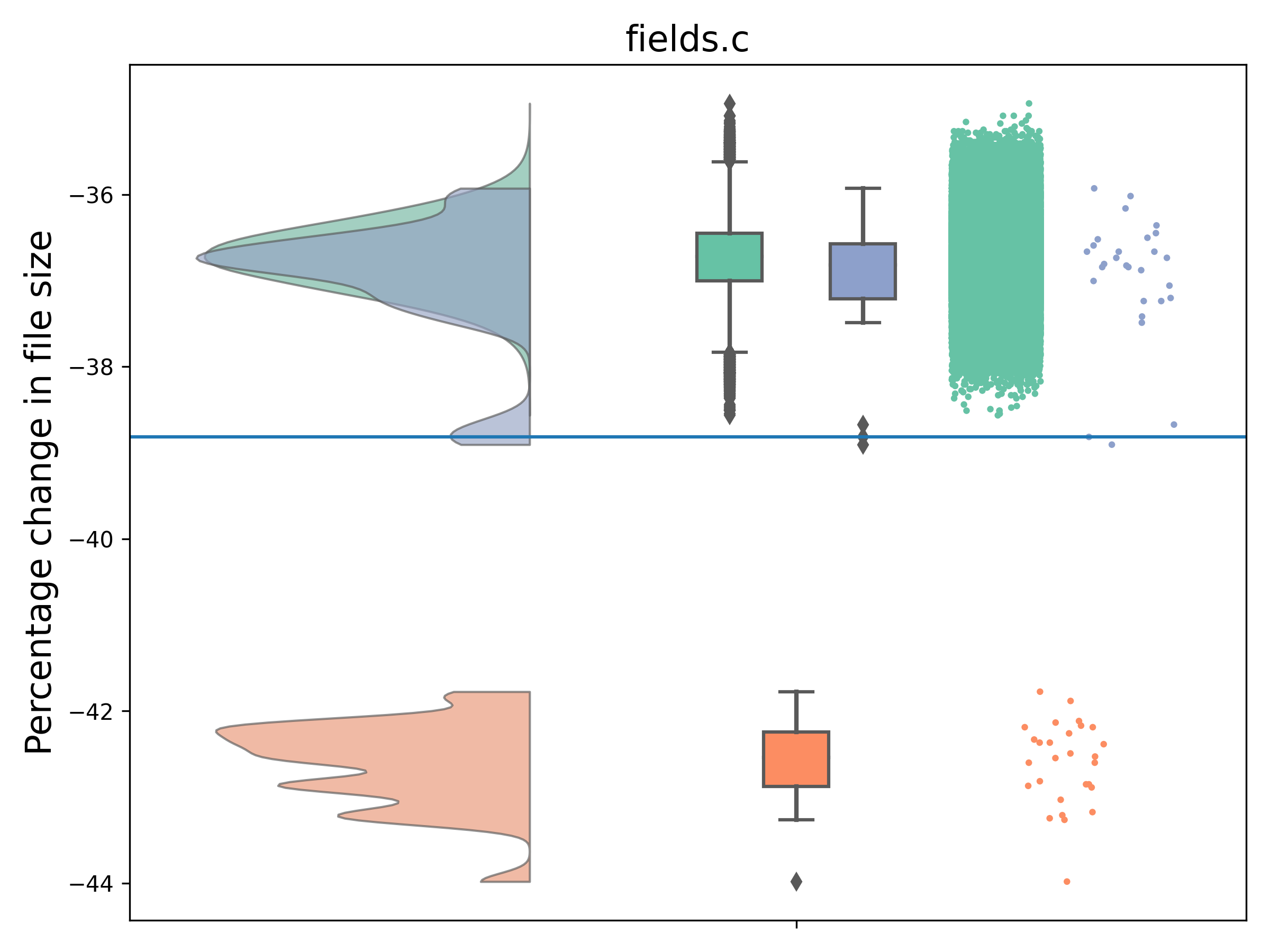}
    \caption{Raincloud plot showing that the best alphabet orders found at the conclusion of local search (orange) give noticeably better compression than that achieved using either randomly sampled alphabet orders (green) or the ASCII alphabet order (horizontal line). 
    The compression achieved by random samples of alphabet orderings are displayed in green. 
    The best achieved via local search with \textsc{Swap} only is shown in orange, (also see Section.~\ref{sec:results-local-search}). 
    The change in compression for the different local search initialization orderings at the start of the search are shown in blue, and these overlap with the random samples. 
    The compression when using the ASCII alphabet is plotted as a horizontal blue line and can be seen at the lower end of the random samples.}
    \label{fig:random-sample-cantrbry-combined}
\end{figure}

\subsection{The Impact of Initialization on Local Search}

\begin{table}
\centering
\rowcolors{1}{}{tablebackgroundcol}
\caption{Determining the best initialization orderings for an early-terminated search. Local search was performed with the \textsc{Swap, Lex} neighborhood for each corpus file, terminating after 1000 steps. IPCT=Inverse Permutation Chapin-Tate, CT=Chapin-Tate, Random=Random Initialization.}
\label{tab:table}
\label{tab:findingBestInits1k}
\tabularfix{
\begin{tabular}{|l|l|S[detect-weight, table-format=-1.3]|l|S[detect-weight, table-format=-1.3]|}
\hline
     & \multicolumn{2}{|c}{After 1000 steps} & \multicolumn{2}{|c|}{At local minimum} \\
File & {Best initialization (1000)} & {C (1000)} & {Best initialization (all)} & {C (all)}\\
\hline
\texttt{alice29.txt}&IPCT&-12.368&Random&-13.601\\
\texttt{asyoulik.txt}&CT&-1.108&CT&-2.07\\
\texttt{cp.html}&IPCT&-25.92&Random&-27.993\\
\texttt{fields.c}&CT&-40.359&ASCII&-43.982\\
\texttt{grammar.lsp}&Random&-29.589&Random&-33.996\\
\texttt{lcet10.txt}&IPCT&-22.503&Random&-23.04\\
\texttt{plrabn12.txt}&ASCII&0.948&Random&0.228\\
\texttt{ptt5}&IPCT&-74.472&IPCT&-74.748\\
\texttt{sum}&ASCII&-30.737&FDA&-33.917\\
\texttt{xargs.1}&CT&-7.783&CT&-12.042\\
\hline
\end{tabular}%
}
\end{table}

When performing a local search, the initial alphabet order makes a difference to the best solution that can be found by the local search using the \textsc{Swap} operator.

Tab.~\ref{tab:findingBestInits1k} shows which initialization ordering achieves the best compression when the search terminates at the local minimum or after 1000 steps.
While randomly chosen orders are competitive if the search is terminated early after 1000 steps, ASCII, Chapin-Tate, and Inverse Permutation Chapin-Tate perform best if the search is allowed to complete.

The plots in Fig.~\ref{fig:over-time-local-search-starts} exemplify how the results improve over the time taken by the search. 
A steep drop in the size of the file is observed followed by a large number of steps until the local optima is reached. 
The initialization order that leads to the best local minimum at the end of the search is not obvious at the start of the search, nor is there consistently an initialization order that would produce a good local minimum across all files.
The range of random sample fitness may completely cover the fixed start positions as in \texttt{alice29.txt} or lay above many of the best fitness starts as in \texttt{fields.c}. Overall the range of random sample starts covers a large amount of solutions that are found using the fixed start positions in our tested files.

The steep improvement early in the search suggests that a limited search may still be beneficial. 
If the search is terminated early, ASCII and Chapin-Tate are the best orderings from which to start the search (Fig.~\ref{fig:over-time-local-search-starts-limited}). 

While using a random initialization still yields an improvement in fitness over time, the overall change in fitness is not as good as choosing a fixed initialization for the explored texts.
It may be possible for multiple orderings to achieve around the same fitness for 1000 neighbor evaluations.

\begin{figure}
    \centering
    \includegraphics[width=0.6\textwidth]{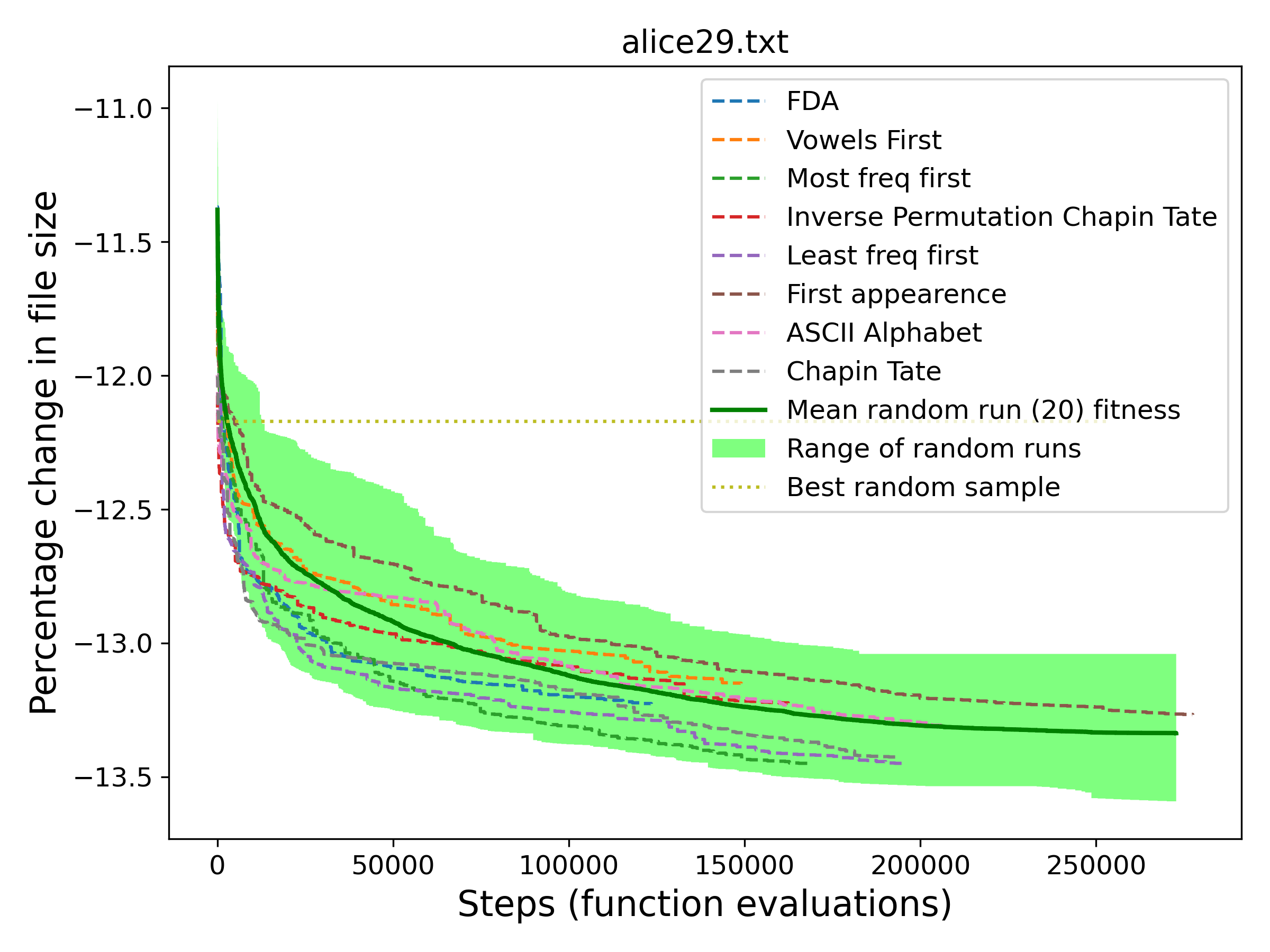}
    \includegraphics[width=0.6\textwidth]{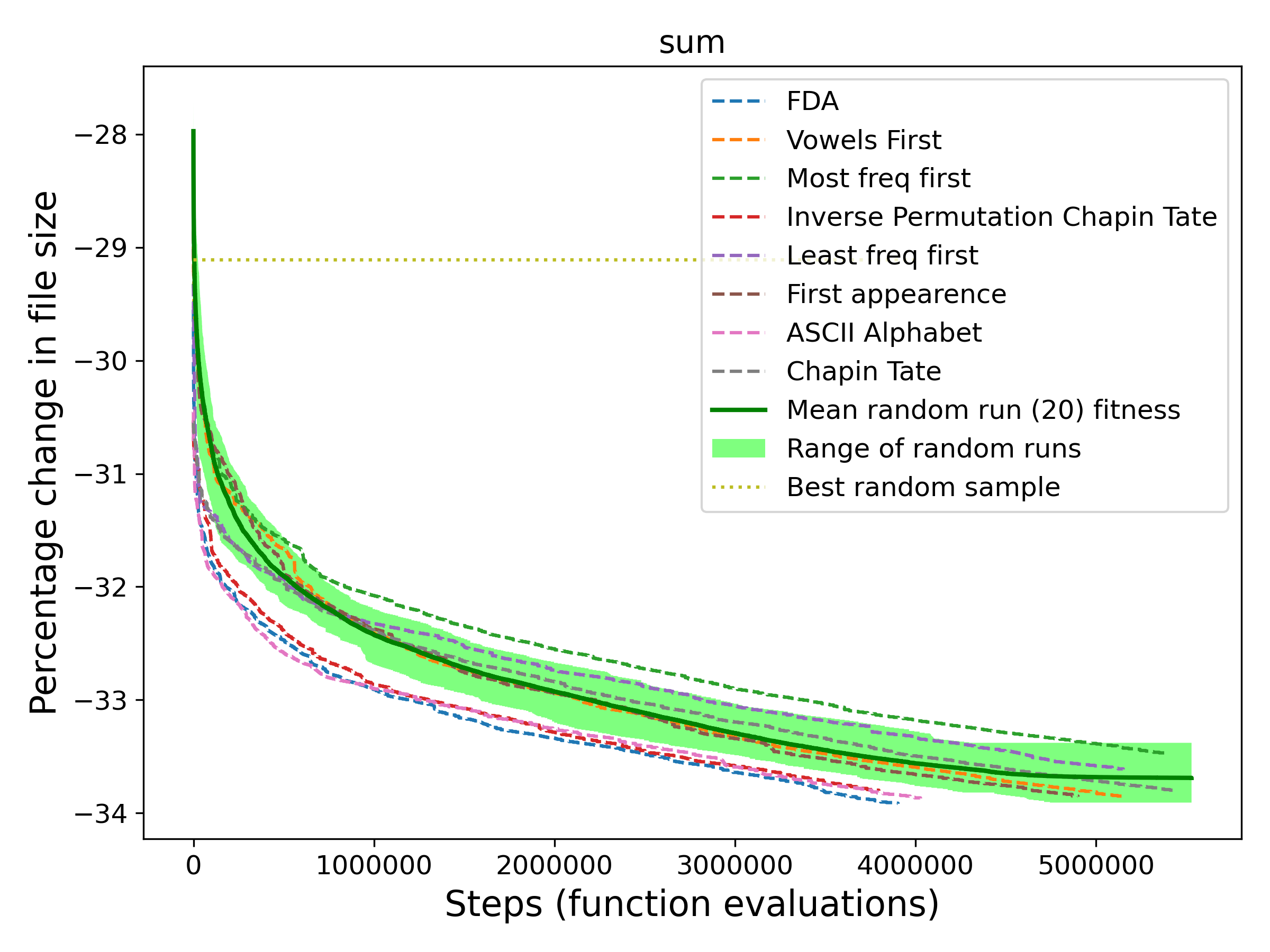}
    \includegraphics[width=0.6\textwidth]{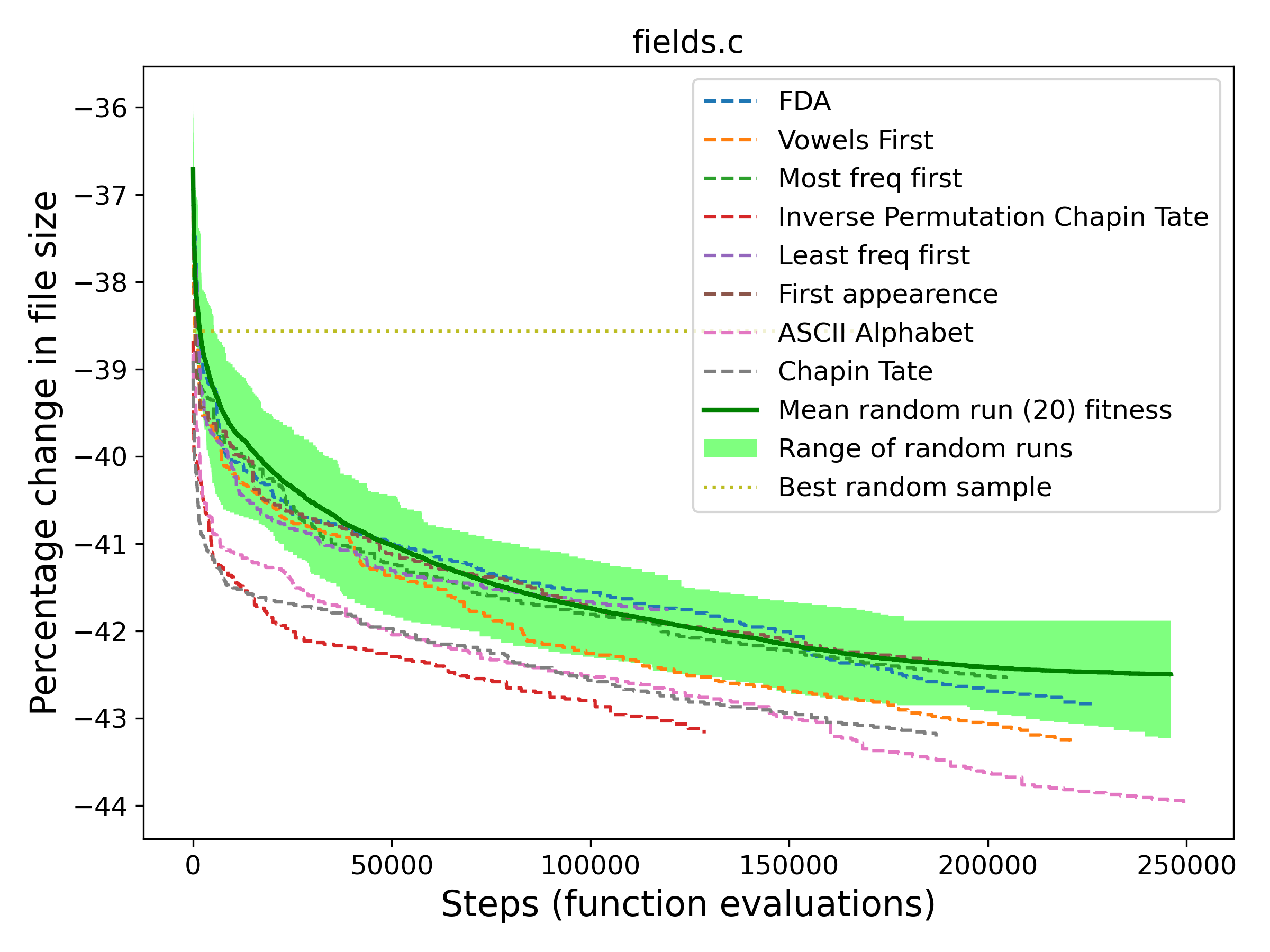}
    \caption{Local search using \textsc{Swap, Lex} from different alphabet order initialization methods over time until a local minimum is reached. It can be observed that there is no best initial order that consistently results in the best local minima for all texts.}
    \label{fig:over-time-local-search-starts}
\end{figure}

From this we conclude that the landscape has lots of local optima and that the path through the landscape is therefore important. We examine other methods such as reverse and random neighbor orderings, and using \textsc{Insert} to make further jumps in the landscape.

\begin{figure}
    \centering
    \includegraphics[width=0.6\textwidth]{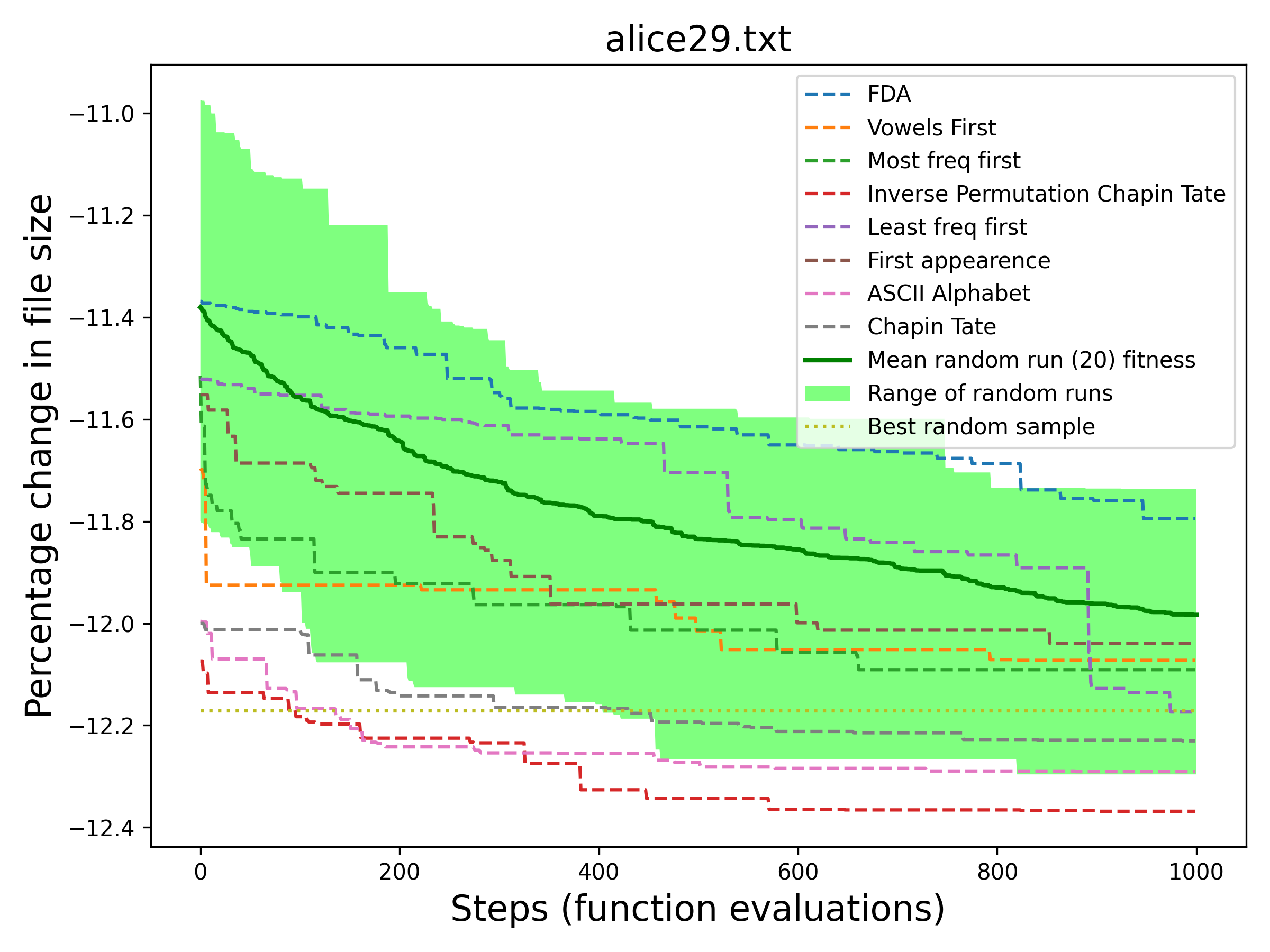}\\
    \includegraphics[width=0.6\textwidth]{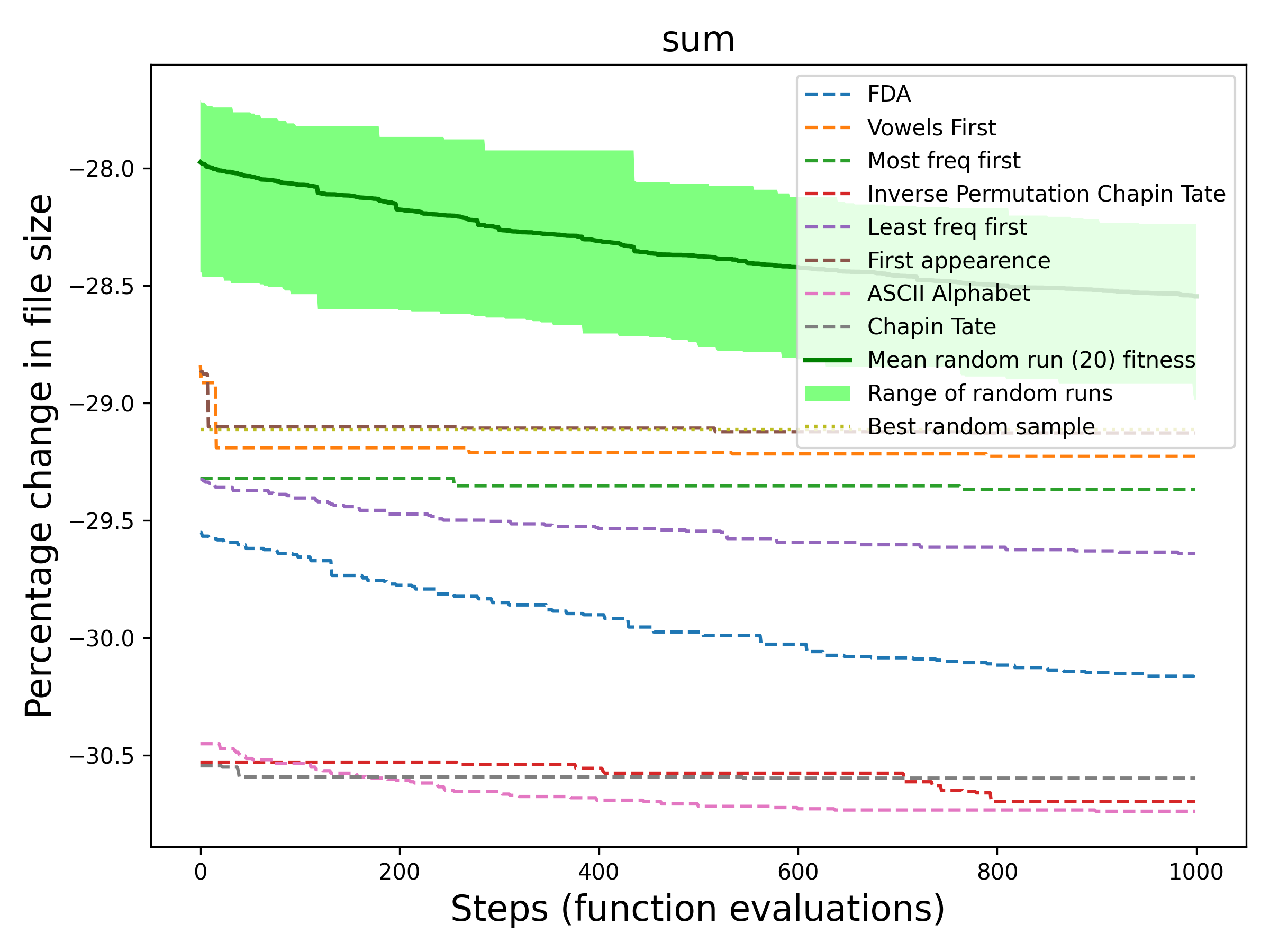}\\
    \includegraphics[width=0.6\textwidth]{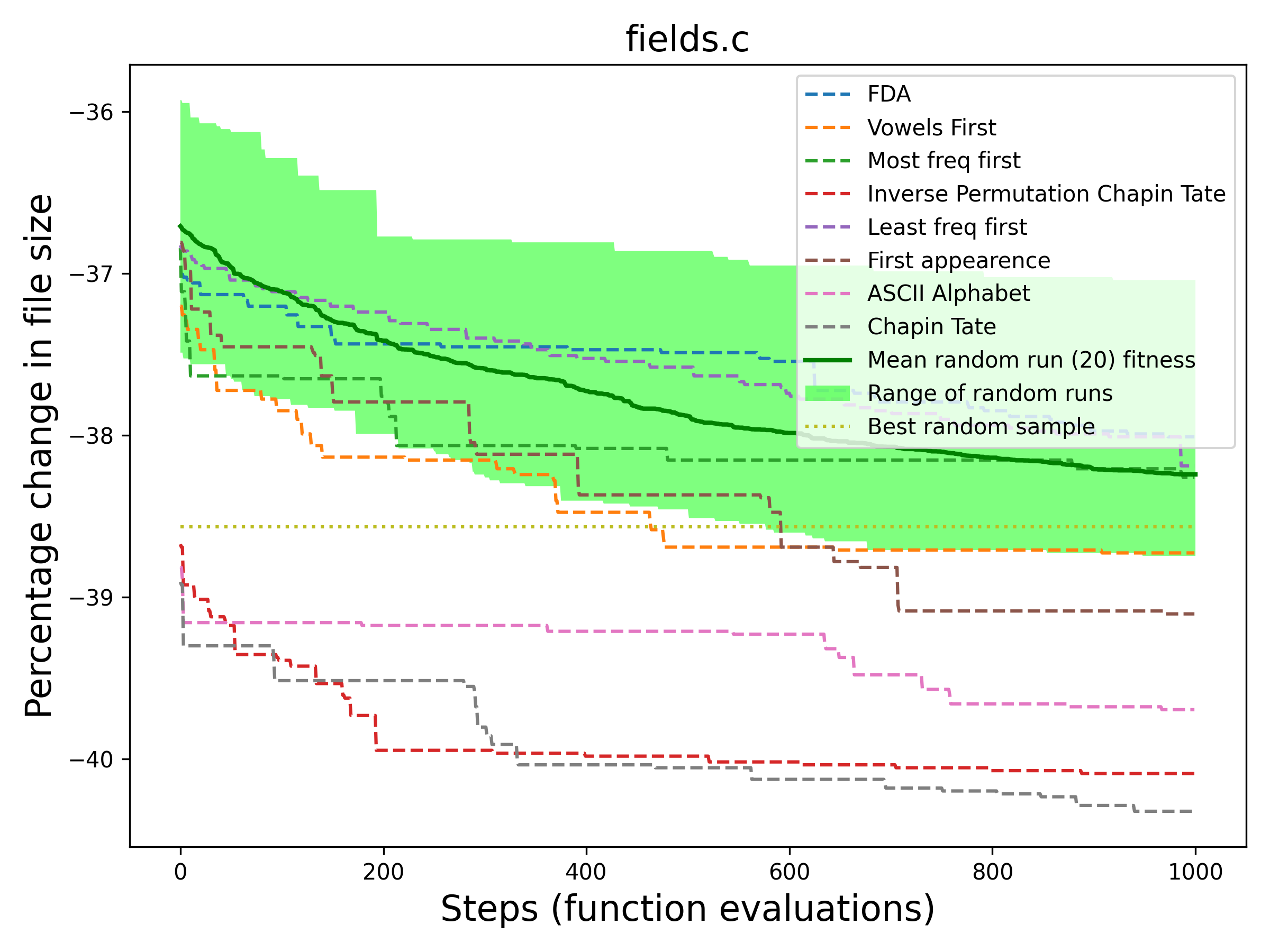}\\
    \caption{Local search for different alphabet order initialization methods over time with \textsc{Swap, Lex}, limited to 1000 neighbor evaluations.
    It can be observed that there is no best initial order that consistently results in the best local minima for all texts, and that the best initialization order is dependent on the number of steps performed.}
    \label{fig:over-time-local-search-starts-limited}
\end{figure}

\subsection{Local Search Operators: \textsc{Swap} and \textsc{Insert}}
\label{sec:whichOperatorToChoose}

The path that the local search explores through the search space is important in determining where in the space the local optima is found. Evaluating more local search neighbors may therefore lead to finding better neighbors and a better overall optima. Since \textsc{Swap} moves two elements into new locations at once.
We therefore consider the \textsc{Insert} operator, either in combination with \textsc{Swap} (as first or second operator) or alone.

Tab.~\ref{tab:local-search-steps-to-beat-random} shows the number of steps taken using each of the different search operators to find an ordering that performs better than the best of the 240,000 randomly sampled orderings. 
We note that local search only requires a few steps before outperforming random sampling, independently of any search operators, and typically takes only a few seconds to compute.
Indeed, table \ref{tab:random-sampling-updates} shows that methods based on local search create a large number of updates when compared to a naive random sampling of the search space. While the average number of updates obtained with the latter is close to the theoretical one (12.4 vs. 12.38), we see that local search algorithms provide substantially more updates. For instance, the \textsc{Swap} method using the lexicographic order generates on average 236 updates across 10 files. Such a number of updates with random sampling would on expectation require more than $10^{100}$ samples  ($\approx e^{236.5}$), which is today not computable.

Methods involving randomized neighbor orderings perform well in the time taken to beat the best of the 240,000 random samples compared to all other methods.
Of these, the \textsc{Swap then Insert} operator performs best.

\begin{table}
\centering
\rowcolors{1}{}{tablebackgroundcol}
\caption{The minimum number of local search steps for the best initialization to find an ordering performing better than the best ordering out of the 240,000 randomly sampled orderings. Files \texttt{asyoulik.txt}, \texttt{cp.html}, \texttt{fields.c}, \texttt{ptt5}, \texttt{lcet10.txt}, and \texttt{sum} are not shown because no local search steps were needed, as one of the initial orderings was already better than the randomly sampled orderings without the need for search.
I=\textsc{Insert}, S=\textsc{Swap}, ItS=\textsc{Insert} then \textsc{Swap}, StI=\textsc{Swap} then \textsc{Insert}.
}
\label{tab:local-search-steps-to-beat-random}
\tabularfix{
\begin{tabular}{|l|S[detect-weight, table-format=3]|S[detect-weight, table-format=3]|S[detect-weight, table-format=3]|S[detect-weight, table-format=3]|S[detect-weight, table-format=3]|S[detect-weight, table-format=3]|}
\hline
Method & {\texttt{alice29.txt}} & {\texttt{grammar.lsp}} & {\texttt{plrabn12.txt}} & {\texttt{xargs.1}} & {All Others} \\
\hline
\textsc{I, Lex}&1437&107&9254&553&0\\
\textsc{I, Random}&10&17&40&22&0\\
\textsc{I, RevLex}&57&1475&1009&1249&0\\
\textsc{ItS, Lex}&1437&107&9254&553&0\\
\textsc{ItS, Random}&16&15&33&54&0\\
\textsc{ItS, RevLex}&57&1475&1009&1249&0\\
\textsc{S, Lex}&96&87&302&178&0\\
\textsc{S, Random}&4&34&16&41&0\\
\textsc{S, RevLex}&105&421&235&428&0\\
\textsc{StI, Lex}&96&87&302&178&0\\
\textsc{StI, Random}&16&20&37&41&0\\
\textsc{StI, RevLex}&105&421&235&428&0\\
\hline
\end{tabular}%
}
\end{table}

\begin{table}
\centering
\rowcolors{1}{}{tablebackgroundcol}
\caption{Number of successive improvements obtained with two local search algorithms compared to random sampling for ASCII ordering. Total number of steps are within brackets. Random sampling has created on average 12.4 updates across the 10 files while the theoretical mean number of updates for random sampling is approximately 12.38 for 240K samples (See section \ref{sec:method-sampling}). In contrast, local search algorithms deliver significantly more updates to the compression when compared to random sampling.}
\tabularfix{
\begin{tabular}{|l|r|r|r|}
\hline
File & {Random Sampling} & \texttt{Swap, Lex} & \texttt{Insert, Lex}\\
\hline
\texttt{alice29.txt}&14 (240K)&231 (205.84K)&319 (519.2K)\\
\texttt{asyoulik.txt}&11 (240K)&218 (125.32K)&310 (399.06K)\\
\texttt{cp.html}&17 (240K)&176 (242.68K)&241 (491.92K)\\
\texttt{fields.c}&10 (240K)&157 (249.31K)&169 (496.97K)\\
\texttt{grammar.lsp}&11 (240K)&67 (65.54K)&86 (161.88K)\\
\texttt{lcet10.txt}&11 (240K)&259 (340.97K)&380 (934.5K)\\
\texttt{plrabn12.txt}&15 (240K)&383 (435.92K)&457 (929.79K)\\
\texttt{ptt5}&12 (240K)&352 (1.25M)&549 (3.59M)\\
\texttt{sum}&10 (240K)&452 (4.03M)&561 (10M)\\
\texttt{xargs.1}&13 (240K)&70 (63.98K)&102 (192.87K)\\
\hline
average number of updates&12.4&236.5&317.4\\
\hline
\end{tabular}%
}
\label{tab:random-sampling-updates}
\end{table}

However, to fully locate any local minimum may in some cases take a very long time (Fig.~\ref{fig:over-time-local-search-compare-methods}). 
Our experiments are limited to 10 million steps for any single run, which is reached for some configurations.
When considering the initialization that locates the best fitness for any method, we find an initial steep drop in fitness for all methods, and that there are groupings of methods that perform similarly.
Generally the methods involving randomized neighbor orderings perform well in few steps and have a fitness which remains competitive with the \textsc{Lex} and \textsc{RevLex} methods.
We observe that even when the number of steps is limited to relatively few in comparison to the number needed to reach a local minimum, the percentage change in file size reached may still be good.

\begin{figure}
    \centering
    \includegraphics[width=0.6\textwidth]{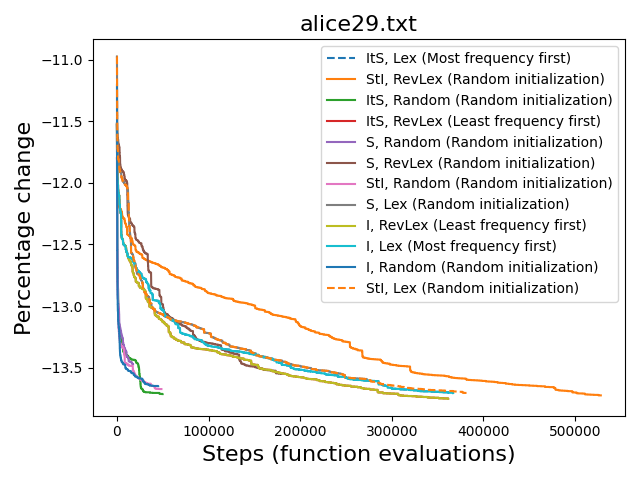}\\
    \includegraphics[width=0.6\textwidth]{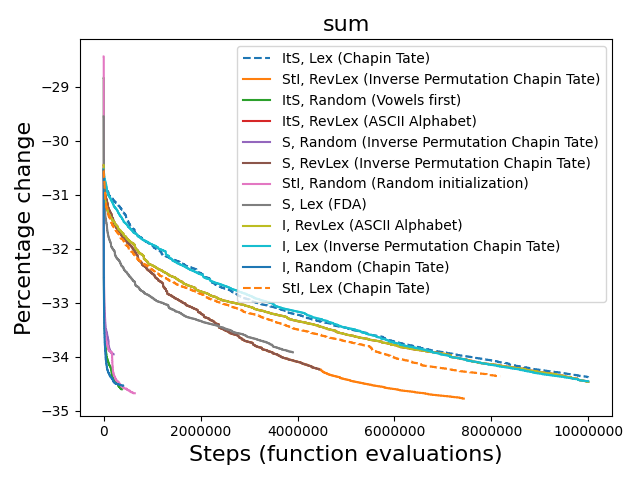}\\
    \includegraphics[width=0.6\textwidth]{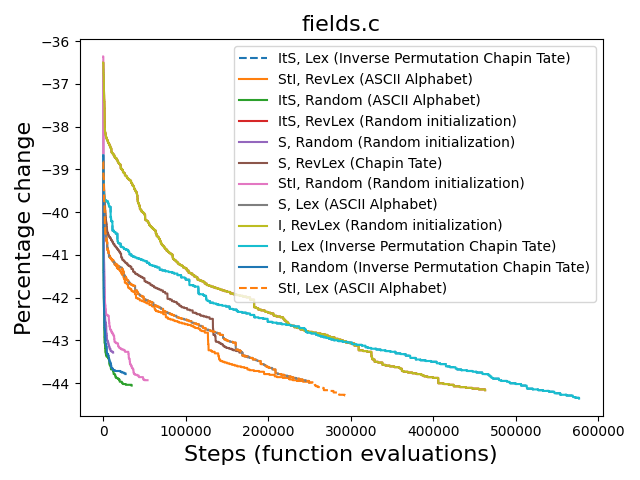}\\
    \caption{The best initialization at the minimum for corpus texts \texttt{alice29.txt}, \texttt{sum}, and \texttt{fields.c} over time for different neighborhood search methods.}
    \label{fig:over-time-local-search-compare-methods}
\end{figure}

This is further demonstrated when considering the final fitness achieved in all runs across all methods. Across our tested files, the number of required steps to locate a minimum splits the methods into three groups (in increasing order of steps): random neighbor ordering methods, \textsc{Swap} and \textsc{Swap}-first methods, and \textsc{Insert} and \textsc{Insert}-first methods (Fig.~\ref{fig:combined-steps-plots-local-search}). Even when considering the final fitness achieved, a randomized neighbor ordering remains competitive (Fig.~\ref{fig:combined-fitness-plots-local-search}).
The file \texttt{sum} is not shown as it has data which was not run to completion due to prohibitive runtimes (Section.~\ref{sec:methods-setup}). However, the trend described also holds with the completed data for \texttt{sum}.

When completing the search to a local minimum, \textsc{Insert} and \textsc{Insert}-first methods perform slightly better than methods involving \textsc{Swap}, however the \textsc{Insert} and \textsc{Insert}-first methods may take prohibitively long compared to \textsc{Swap}, as they search a wider neighborhood, for very little gain.

\begin{figure}
    \centering
    \includegraphics[width=0.6\textwidth]{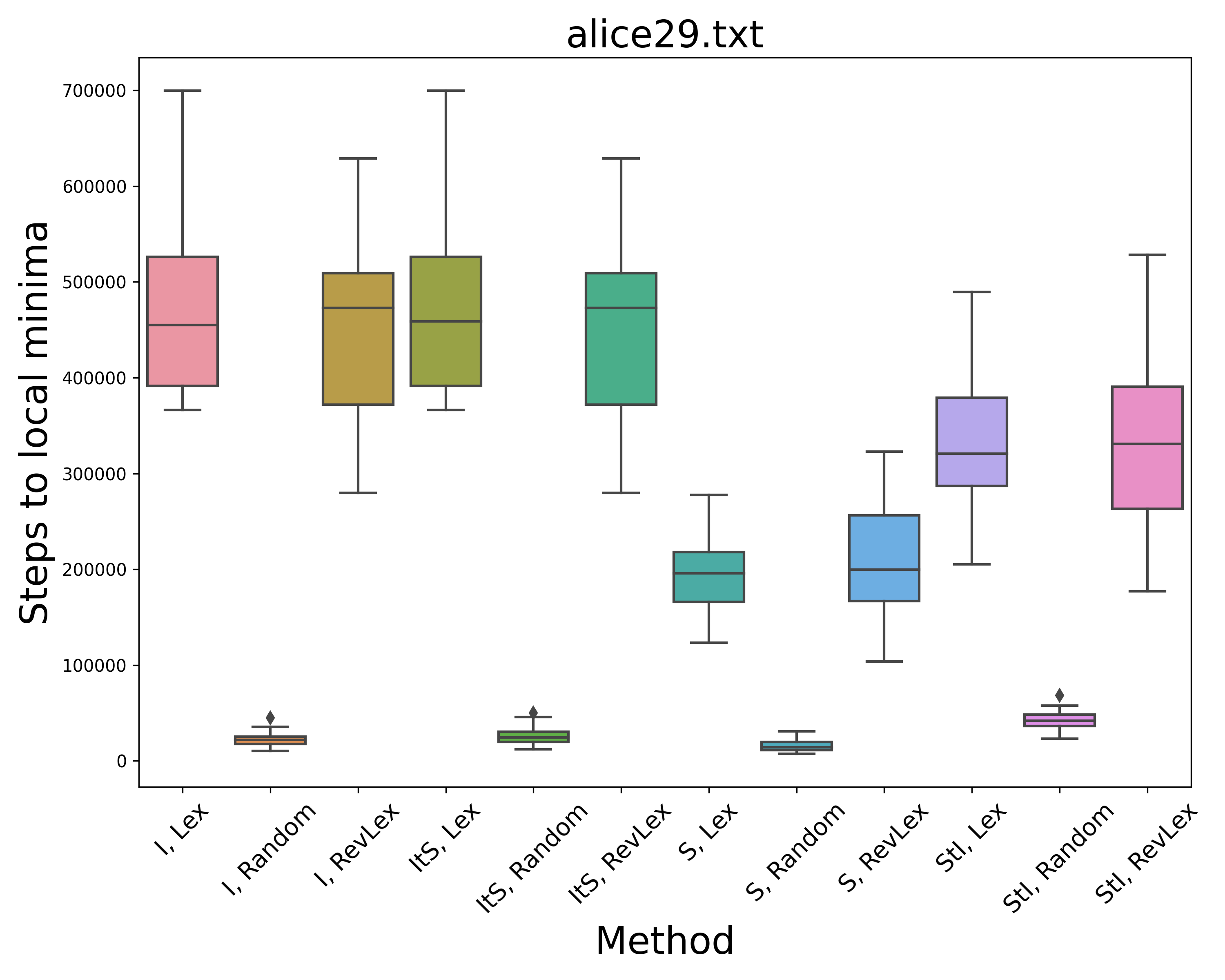}
    \includegraphics[width=0.6\textwidth]{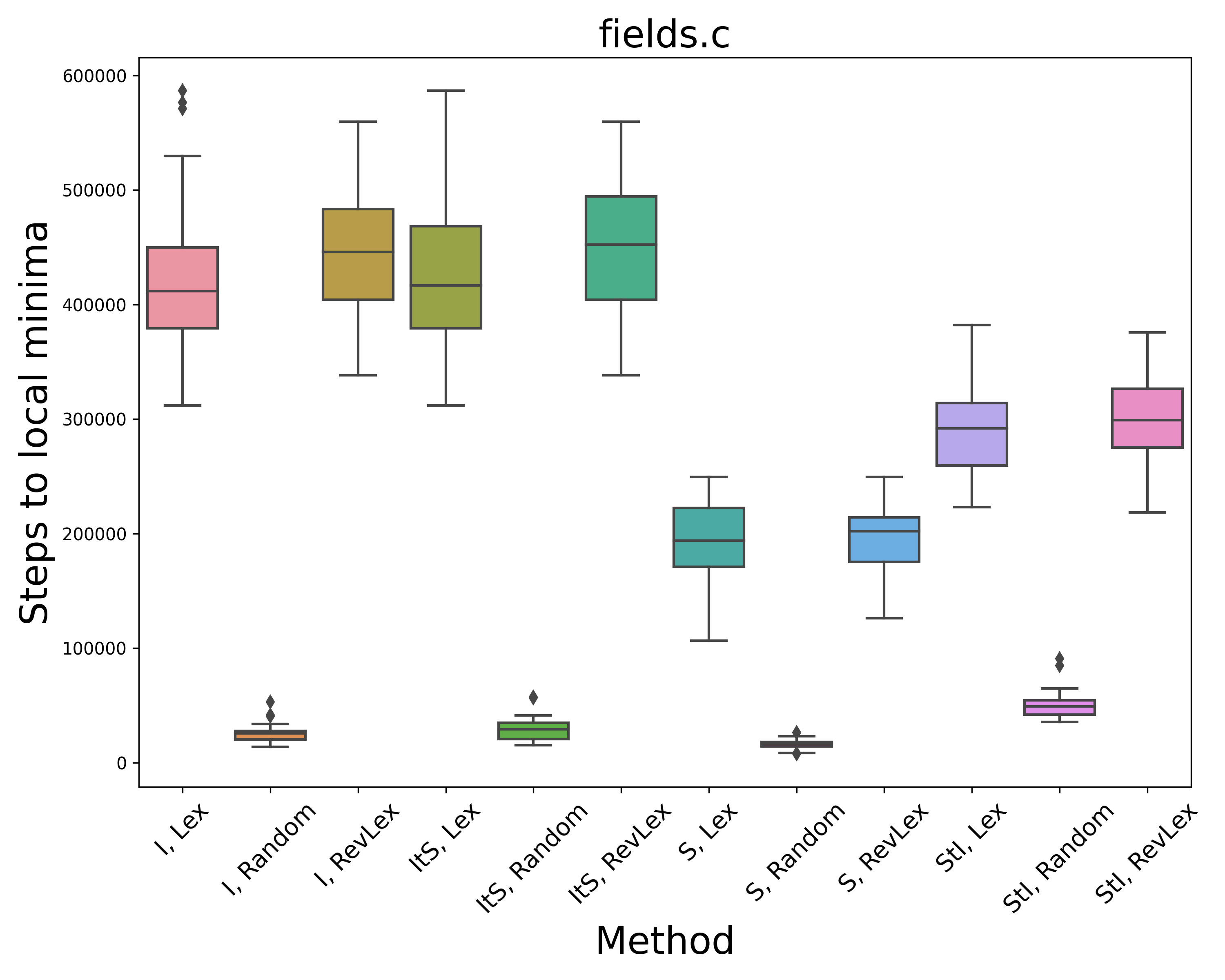}
    \includegraphics[width=0.6\textwidth]{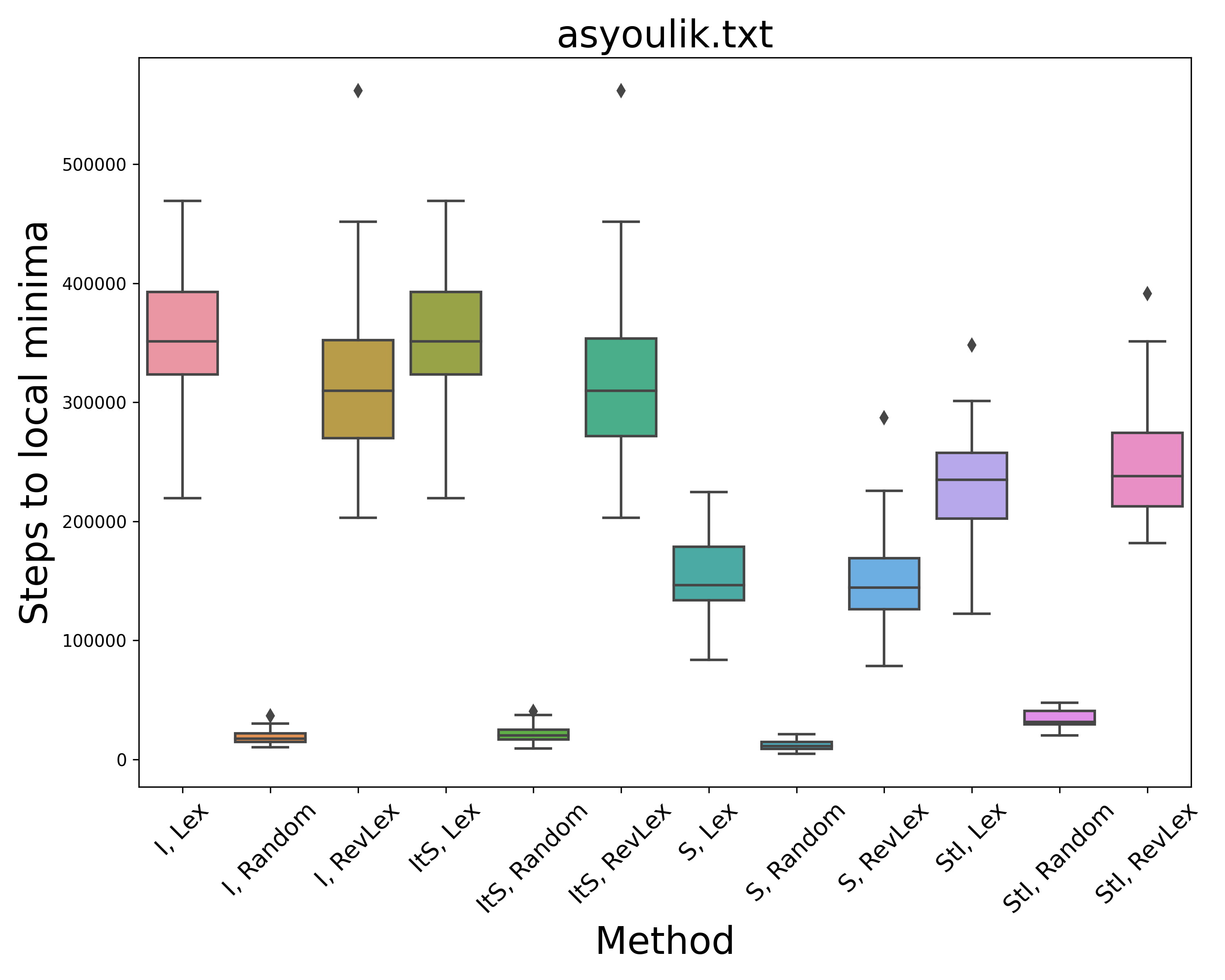}
    \caption{Number of steps taken to find a local minimum for \texttt{alice29.txt}, \texttt{fields.c}, and \texttt{asyoulik.txt} for each neighborhood method in local search. The variation in each box plot shows the difference made by distinct initializations.}
    \label{fig:combined-steps-plots-local-search}
\end{figure}

\begin{figure}
    \centering
    \includegraphics[width=0.6\textwidth]{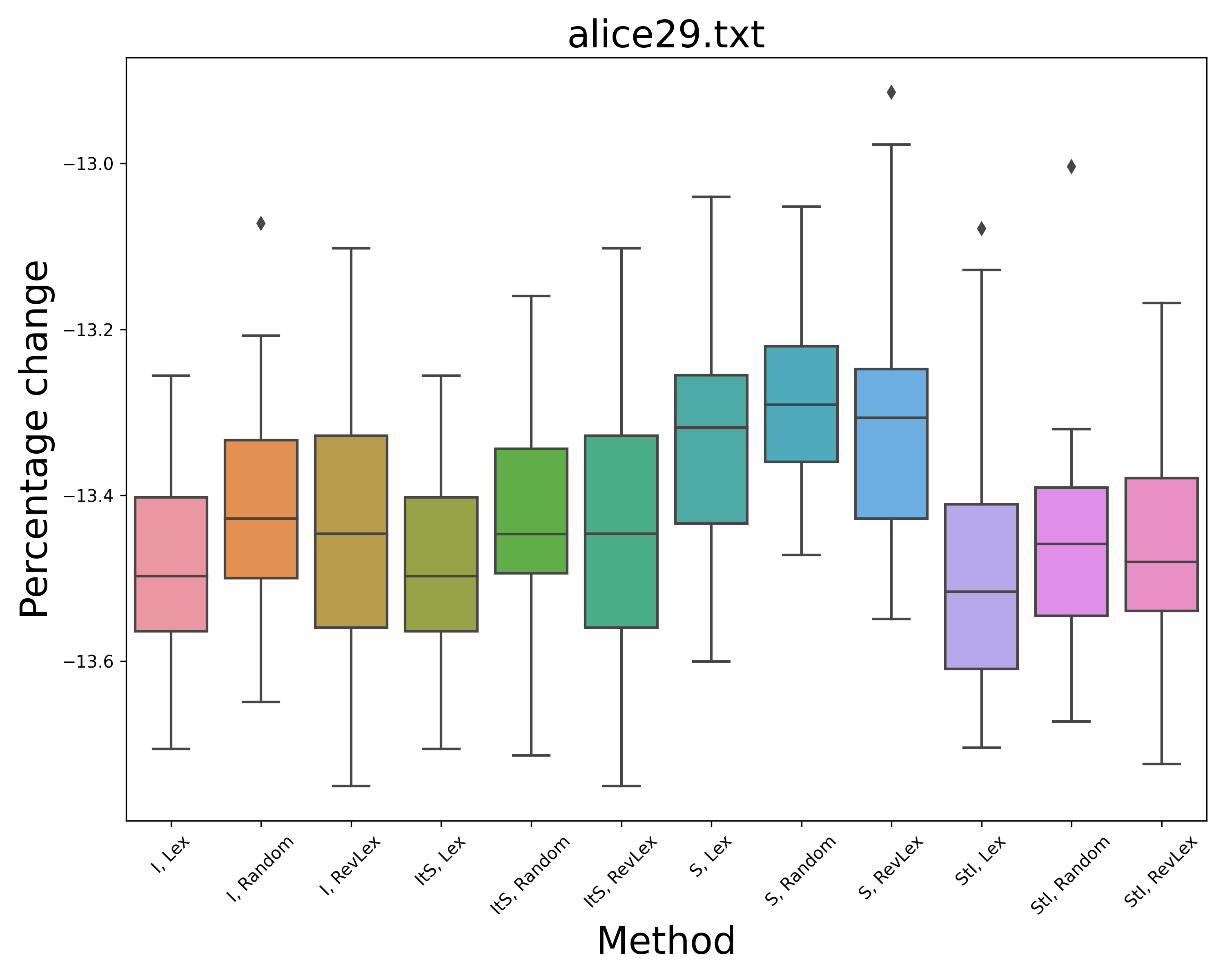}
    \includegraphics[width=0.6\textwidth]{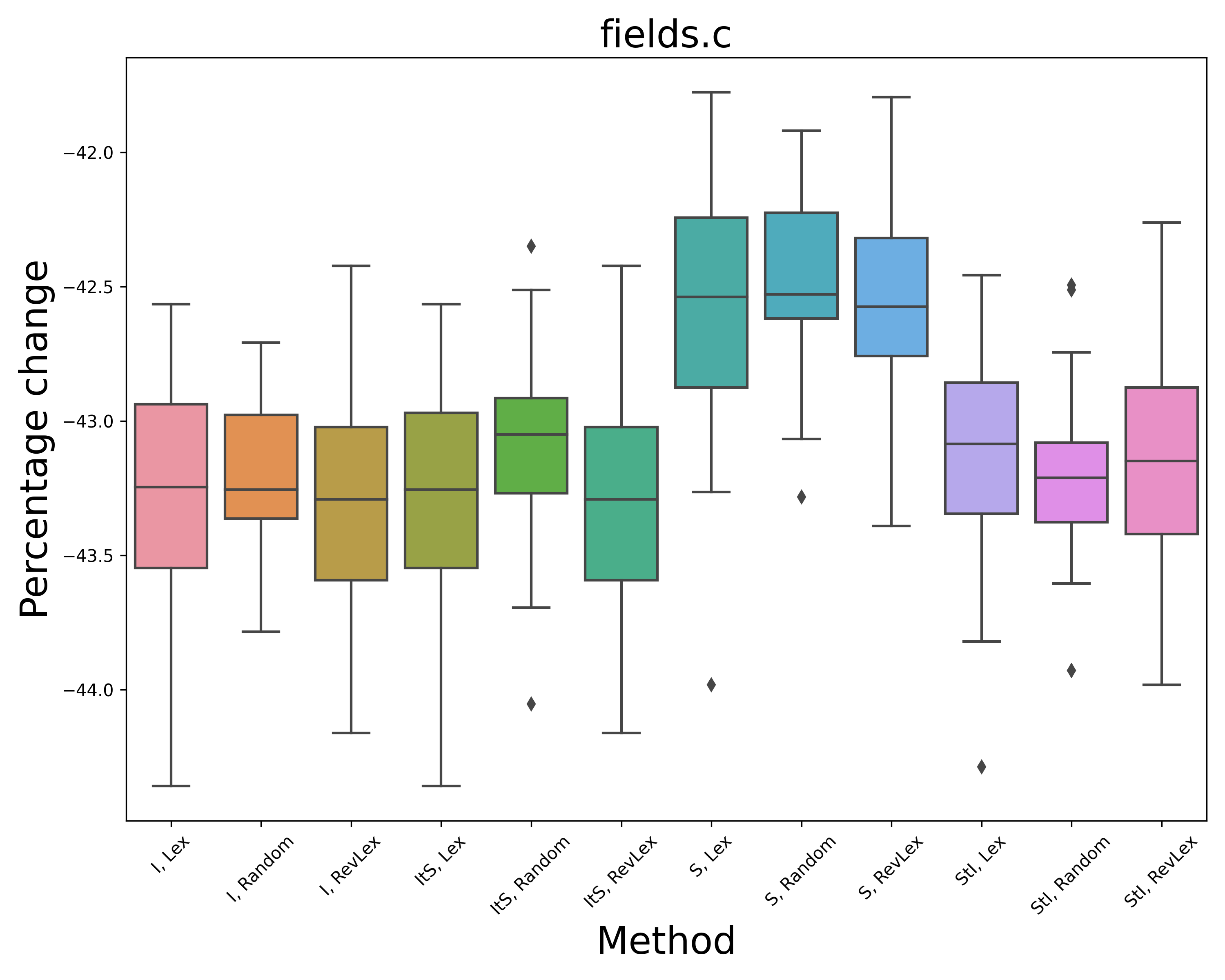}
    \includegraphics[width=0.6\textwidth]{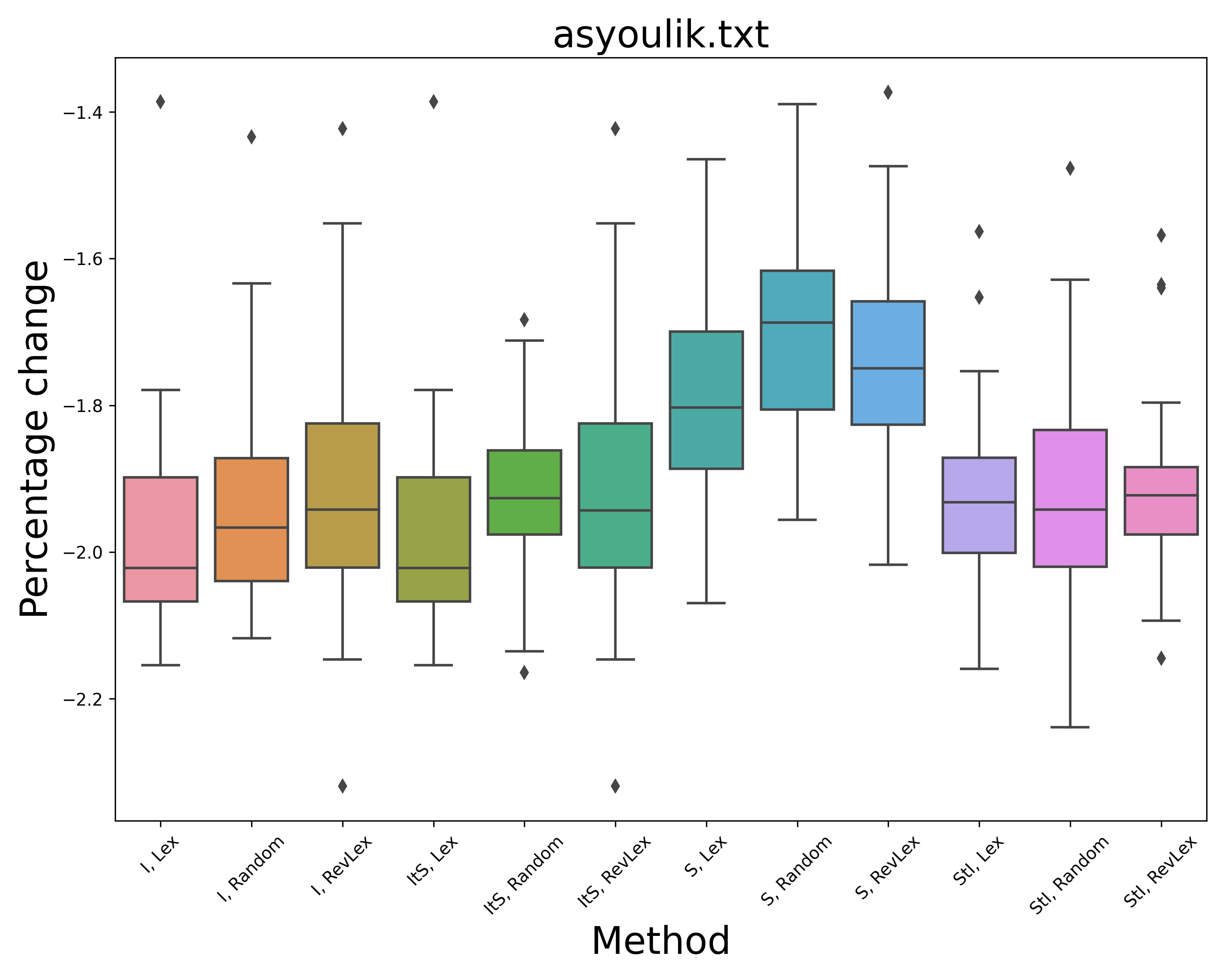}
    \caption{Percentage change in compression for \texttt{alice29.txt}, \texttt{fields.c}, and \texttt{asyoulik.txt} for each neighborhood method in local search. The variation in each box plot shows the difference made by distinct initializations. }
    \label{fig:combined-fitness-plots-local-search}
\end{figure}

\section{Conclusion and Future Work}
\label{sec:conclusion}

The BWT is an important string transformation, enabling a concise, searchable representation of a string. It relies on ordering the characters in the string and this is usually assumed to be ASCII ordering, without exploring whether alternatives might be more effective. 

We studied heuristics for the computationally hard RLBWT Alphabet Ordering Problem which takes a string \s{s} of length $n$ over an alphabet $\Sigma$ of size $\sigma$ and seeks an ordering of $\Sigma$ for $RLE(BWT(\s{s}, \Sigma))$ that minimizes $\norm{RLE(BWT(\s{s}, \Sigma))}$.
We have performed extensive benchmarking using files from the Canterbury corpus and implemented the experimentation on Super Computing Wales HPC.
We started by inspecting a large sample space of 240,000 randomly sampled alphabet orderings and found only limited improvement over the ASCII ordering.
This motivated searching local neighborhoods to improve fitness - this was achieved using a First-Improvement algorithm.

Various initializations have been applied to the test files to attempt to speed up the search which include: ASCII order, letter frequency and order of appearance, and a hand-tuned ordering given by \cite{chapin1998higher}.
Jumping around the complex landscape was implemented with neighborhood search using \textsc{Swap} and \textsc{Insert} operators as well as  combinations of these operators. 
Additionally varying the neighbors of an alphabet ordering has been explored, searching them lexicographically, reverse-lexicographically, and randomly. 
Overall, we inspected a combination of 9 initializations, 4 operators, and 3 neighborhood search methods, giving a total of 108 algorithm configurations. 

The number of search steps needed to outperform the best result of random sampling was found to be relatively few and could be computed in seconds, but quickly increased for achieving a local minimum. Indeed, we demonstrated that reaching a similar number of improvements with random sampling would require investigating a much larger number of random samples (e.g., $\gt10^{100}$), which is not feasible to compute.

The chosen initial alphabet order was found to influence the best solution that can be found, and we demonstrated this using the \textsc{Swap} operator. 
While we observed variation in the initial ordering which achieved the best fitness within a time limit, there is not necessarily one best initial ordering. However, all initializations exhibit a clear decrease in file size followed by a large number of steps until the local minimum is reached.

We observed that the random neighbor ordering methods perform well in the early stages of the search, and while not the best they remain competitive overall.
\textsc{Insert} and \textsc{Insert}-first methods are slower than \textsc{Swap} and \textsc{Swap}-first methods to reach a local minimum but will usually achieve a better compression.
Although our empirical evidence shows that local search is indeed effective for improving the RLBWT, and trade-offs occur, nonetheless we are still able to recommend a time-limited local search using
\textsc{Swap} with Random neighborhood exploration to improve rapidly upon the ASCII ordering. 
If more computational time is available to explore a better ordering then we recommend a local search using \textsc{Insert then Swap} with Lexicographic neighborhood exploration. 

We found it surprising that the ASCII ordering performs very well when compared to a random order. Also, it works relatively well as an initialization for the local search. 
If limited time is available we recommend the Chapin-Tate ordering to start the search, or a random order to initialize a longer search. 
However, there is no clear best initialization suiting different files considered in the corpus.

This is a difficult problem in theory and we have now demonstrated that this is a challenging problem in practice for the range of files in the Canterbury Corpus, with no clear winning strategy.
We have demonstrated that navigating trade-offs can be worthwhile for enhancing compressibility.
Our local search performs much better (faster convergence, better fitness) than random sampling and is useful even when computational time is limited.

In future work we intend to investigate further what constitutes a good alphabet ordering, the effect that different changes to an ordering can have on the transformed string, what factors contribute to the quality of an ordering for a given string and why some orderings perform better than others, in relation to the type of data (for instance natural language versus other structured data).

We want to determine which characters can be moved to benefit the search and to use this knowledge to inspire new and more specific local search operators. This may include using operators which re-order multiple characters at a time or incorporation of various crossover operators.
In addition different encoding methods for RLE may give better results and should be investigated.

\section{Statements and Declarations}

\subsection{Funding}

This work is supported by  the UKRI AIMLAC CDT,  \url{http://cdt-aimlac.org}, grant no. EP/S023992/1,
and was part-funded by the European Regional Development Fund through the Welsh Government, grant 80761-AU-137 (West).

\begin{centering}
\bigskip
\includegraphics[width=0.3\textwidth] {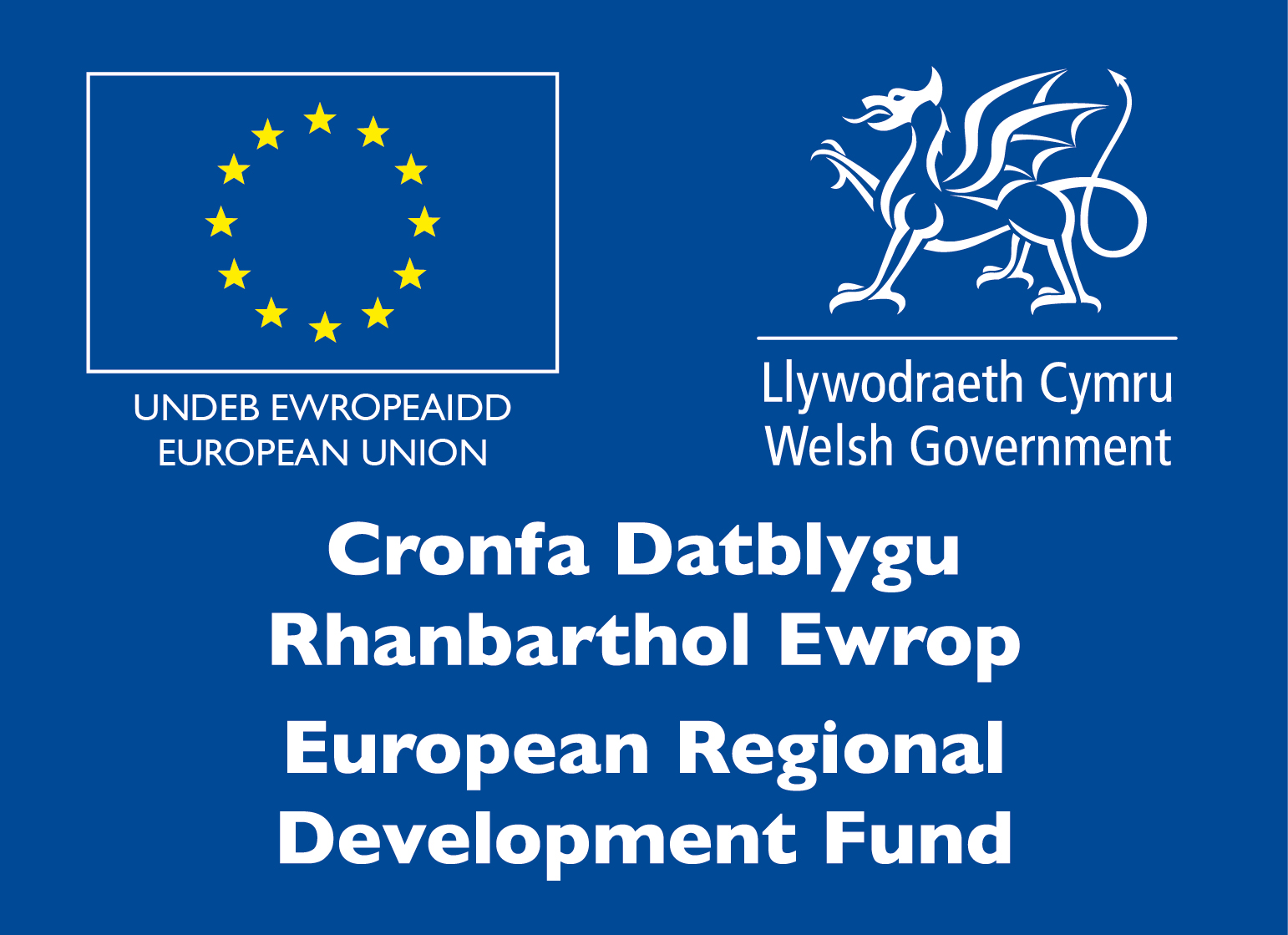}
\par
\end{centering}

\subsection{Acknowledgments}

We acknowledge the support of the Supercomputing Wales project, which is part-funded by the European Regional Development Fund (ERDF) via Welsh Government.

\subsection{Conflict of interest}

The authors declare that they have no conflict of interest or competing interests.

\subsection{Availability of code and data}

Code is available in our repository \url{https://github.com/jam86/Heuristics-for-the-Run-length-Encoded-Burrows-Wheeler-Transform-Alphabet-Ordering-Problem}, and \url{https://doi.org/10.5281/zenodo.8139504}.

Data from our experiments is available at \url{https://doi.org/10.5281/zenodo.8139367}.

\subsection{Author contributions}

All authors contributed to the study conception and design. Material preparation, data collection and analysis were performed by Lily Major. All authors wrote, read, and approved the final manuscript.

\clearpage

\bibliography{main}

\end{document}